\newtheorem{theorem}{Theorem}
\newtheorem{proposition}[theorem]{Proposition}
\newcommand{\B}{{\mathbb{B}}}
\newcommand{\SSS}{{\mathbb{S}}}
\newcommand{\N}{{\mathbb{N}}}
\newcommand{\z}{{\overline{z}}}
\begin{document}




\title{A Likelihood Ratio Detector for Identifying Within-Perimeter Computer Network Attacks.}

\author{Justin Grana \and David Wolpert \and Joshua Neil \and Dongping Xie \and Tanmoy Bhattacharya \and Russel Bent}


\maketitle
\begin{abstract}
The rapid detection of attackers within firewalls of
enterprise computer networks is of paramount importance.  
Anomaly detectors address this problem by quantifying deviations from 
baseline statistical models of normal network behavior and signaling
an intrusion when the observed data deviates significantly from the
baseline model.  However, many anomaly detectors do not take
into account plausible attacker behavior. As a result, anomaly
detectors are prone to a large number of false positives due to
unusual but benign activity.  This paper first introduces a stochastic
model of attacker behavior which is motivated by real world attacker
traversal. 
Then, we  develop a likelihood \emph{ratio} detector that
compares the probability of observed network behavior under normal
conditions against the case when an attacker has possibly compromised
a subset of hosts within the network.  Since the likelihood ratio
detector requires  
integrating over the time  each host becomes compromised, we illustrate how to
use Monte Carlo methods to compute the requisite integral.  We then
present Receiver Operating Characteristic (ROC) curves for various
network parameterizations that show for any rate of true positives,
the rate of false positives for the likelihood ratio detector is no
higher than  
that of a simple anomaly detector and is often lower.  We conclude by
demonstrating the superiority of the proposed likelihood ratio
detector when the  network topologies and parameterizations are
extracted from real-world networks.

\end{abstract}


\section{Introduction}
Many existing systems designed to detect intrusions into computer
networks monitor data streams only at the perimeter of the network. In
addition, many network intrusion detection systems, such as snort 
~\citep{roesch1999snort}, are signature based, meaning that every
communication entering or 
leaving the network is examined for matches to a database of signatures, or
indicators of compromise. At this point, the long list of
breaches to corporate networks \citep{targetbreach, anthembreach}
speaks loudly to the insufficiency of these methods.  Attackers are
able to innovate rapidly in order to avoid signature schemes, and
penetrate these perimeter systems seemingly at will.  Therefore, there
is a pressing need to identify attackers \emph{within} network
perimeters, and to do so using behavioral methods rather than
signatures.  


Anomaly detectors --- a model-based approach--- show promise in
detecting within-perimeter attacks.  In general, anomaly detectors
quantify  ``normal'' network behavior, and when observed behavior
significantly deviates from the baseline model, an intrusion is
signaled.  As a simple
  example, consider an anomaly detector that models a computer network  as 
a directed graph where nodes are users within a network and edges
represent a communication channel between users. The detector is then
calibrated such that it specifies the average rate of packet
transfer along each edge.  When the observed rate of packet transfers
is sufficiently different from the calibrated rate of packet transfers,
the detector signals an intrusion.

In practice, many reported anomalies end up being false,
reflecting behavior that is unusual but benign.   This is due in part
to an incomplete specification of normal network behavior in the null
hypothesis as well as the  difficulty in modeling and predicting the
behavior of humans that interact over the network.   There
are at least two approaches in addressing this issue. The first is to
improve the specification of the 
network under normal conditions.\footnote{
  This work is similar to reducing prediction error of network
  traffic.  See \citet{netpred}, \citet{endtoend} and
  \citet{timefreq}, for work that
  focuses on improving the modeling and prediction of normal network
  behavior.}  %
The second is to develop a model of
attacker behavior and compare the probability of the observed behavior
under the hypothesis that the network has been compromised against the
hypothesis that the network is functioning under normal conditions.
With an accurately specified attacker model, such an approach would
rule out benign but unusual activity as being malicious since it
is not consistent with attacker behavior.  Our work in this paper
takes the second approach.  More explicitly, to our knowledge this
paper is the   first to incorporate an exact parametric specification
of attacker   behavior into a    likelihood ratio detector for
identifying malicious traversal   activity within a network perimeter.


The challenge is in how to model the behavior of a network that has been
penetrated without pre-supposing  attacker methods, since these
methods evolve  rapidly. To see how this might be done, consider a
common attack conducted on an enterprise network.  First, a Phishing
email or set of emails, containing either a malicious attachment or a
link to an internet host serving malware, is sent to the target 
network. Click rates on Phishing emails, even after enterprise training is
conducted, can be as high as 50\% \citep{kumaraguru2009school}, providing a
high-confidence intrusion vector.

At this point, the firewall is penetrated and the attacker has control of an
initial host in the target network.  The attack is far from complete
since the initially compromised host is not the primary target of the
attacker.  Instead, the attacker seeks to penetrate the network and
access key servers. However, since credentials are typically required
to access these servers,  the attacker undergoes a process known as
{\em lateral movement} to move among hosts collecting these
credentials \citep{kent2013differentiating}. 
This means that there is a definite sequence in the movement of the attacker
across the network, from computers with low value (for any of the
goals of inserting malware, extracting data, or stealing credentials) to 
computers with higher value, such as data servers and active
directories. This will be true \emph{no matter what precise methods 
  the attacker uses} to achieve that movement.
As a result, the attacker's traversal  will leave a trace of
increasing network traffic going from low value computers to
progressively higher value ones. 
Therefore, an increase in network traffic along paths from low value
to high value nodes in a network can be used as the basis of a model
of network behavior once it has been penetrated.

The approach in this paper is parallel to that of \cite{strans} in
that we first propose a model of attacker behavior and a novel
detection criteria based on a likelihood ratio.  For various network
parameterizations, we simulate network activity in both the normal and compromised state.
We then employ receiver operating characteristic (ROC) curves to show
that the proposed likelihood ratio detector
outperforms a simple anomaly detector that does not exploit
information regarding the traversing nature of an
attack. In addition  we develop the Monte Carlo techniques  used to approximate the
relevant integrals in computing our proposed likelihood ratio.    Finally, we 
extract topologies and parameter data from real-world networks and
then simulate attacker behavior. The results  show that in real-world
networks, our proposed likelihood ratio detector is superior to the simple
anomaly detector.
%
%



\section{Background}
\label{background}

Model-based anomaly detection proceeds by modeling and estimating the
parameters, $\hat \theta$,  of a computer network  under normal conditions.
Next, given a dataset $\mathbf{D}$ under question, the likelihood of the
parameters given the data can be evaluated: $\mathcal{L}(\mathbf{\hat
  \theta} \mid \mathbf{D})$.  A generalized likelihood ratio test (GLRT)
can then be used to infer whether a more likely alternative
parameterization is present given data $\mathbf{D}$
\begin{equation*}
\label{glrt}
GLRT = \frac{\mathcal{L}(\mathbf{\hat \theta} \mid \mathbf{D})}
{\sup_{\mathbf{\theta}\in \mathbf{\Theta}}\mathcal{L}(\mathbf{
\theta} \mid \mathbf{D})}
\end{equation*}
where $\mathbf \Theta$ is an alternative parameter space. 
Typically, we choose what data $\mathbf{D}$ to collect in order to
facilitate statistical discovery of security breaches.  For example, the
network model under normal conditions might be a graph connecting computers
(nodes or hosts) with edges representing parameterized time-series of
traffic.  The data collected would then be communications between
nodes.  When the observed communication pattern is different from the
parameterized time-series, the anomaly detector would sound an alarm.
Additionally, since attacks typically cover multiple nodes and edges, subgraphs can be
used to group data from multiple nodes and edges into $\mathbf{D}$ for
increased detection power. Such graph based methods include
\cite{borgwardt2006pattern,eberle2010insider,neil2013scan,staniford1996grids,
  djidjev2011graph}.  


If we know that the attacker behaves according to a
specified alternative parameter vector, say $\bf \theta_A$, then the
uniformly most powerful test for rejecting the null hypothesis that no
attack is present is a likelihood ratio test where
$\bf{\theta}_A$ is used in the denominator.  That is, if we know the
attacker is behaving according to $\bf{\theta}_A$, the power of the
test is maximized when  using the test statistic 
\begin{equation}
\label{altglrt}
\overline{GLRT} = \frac{\mathcal{L}(\mathbf{\hat \theta} \mid \mathbf{D})}
{\mathcal{L}(\mathbf{ \theta_A} \mid \mathbf{D})}.
\end{equation}

However, the set of alternatives $\mathbf{\Theta}$ is typically
under-specified.  In other words, anomaly detectors do not specify
exact attacker behavior but simply restrict the parameter space of
alternatives.  A representative example of such
a detector is  the Modeled Attack Detector (MAD) given in
\cite{agganom1}. In their 
work, the authors consider the rate of  incoming traffic in order
to detect a Distributed Denial of Service (DDoS)  attack.  They assume that
that under normal conditions, the number of incoming connections can
be modeled by a Poisson 
distribution with average rate of messages per unit time of
$\lambda_B$.  The authors treat $\lambda_B$  as a known and
calibrated parameter.   
Therefore, given a sequence of incoming connections (i.e. one unit
of network traffic)  $\mathbf{D}=\{d_1, d_2... d_N\}$ 
per unit time interval,  the probability of observing
$\mathbf{D}$ under the hypothesis that $H_0$ = \textit{no attack is
  taking place}  is given by
\begin{align}
P(\mathbf{D}| H_0) =
\prod_{i=1}^N\frac{e^{-\lambda_B}\lambda_B^{d_i}}{d_i!}.
\label{poissonlikelihood}
\end{align}
The authors  assume that under a DDoS attack the network
receives additional malicious connections at  fixed, deterministic
time intervals but at an unknown rate.  If the 
rate was known, the probability of an observed sequence under the
hypothesis that $H_1$ =  
\textit{DDoS attack is occurring} is given by
\begin{align}
P(\mathbf{D}| H_1)
=\prod_{i=1}^N\frac{\lambda_B^{d_i-\lambda_m}e^{-\lambda_B}}{(d_i-\lambda_m)!}.
\label{poissonattackerlikelihood}
\end{align}
where $\lambda_m$ is the rate at which the network receives
malicious connections. 
In reality, $\lambda_m$---the rate under the alternative
hypothesis---is unknown so a simple
likelihood ratio test is unavailable.  Instead, the authors employ a 
GLRT which  is given by%
\begin{align}
LR = \frac{P(\mathbf{D} |
  H_0)}{\displaystyle\max_{\lambda_m}P(\mathbf{D} | H_1)}
\label{agglhood}
\end{align}
where the denominator is maximized when $\lambda_m = -\lambda_B 
+\frac{1}{N}\sum_{i=1}^Nd_i$.  
When $LR$ is less than a predefined threshold, the MAD
indicates an attack.


Although the model of \citet{agganom1} is concerned with DDoS attacks and
not within-perimeter attacks, it succinctly showcases the key elements
of anomaly detection that this paper intends to improve upon.  Like the
MAD, the work presented in this paper will use a likelihood ratio as
the detection criteria. 
However, unlike the MAD, this paper will focus on incorporating
\emph{exact} specification of attacker behavior into a likelihood
ratio detector for \emph{within-perimeter} detection to improve
detection accuracy.

The likelihood ratio detector we introduce in this paper is most similar to the
anomaly detector presented in \citet{neil2013scan}, known as PathScan.
In that work, 
communication channels between hosts are in either an active or
inactive state.  In the active state, it is assumed that
communications take place stochastically at a known rate.  In the
inactive state, no communication occurs.  
The anomaly detector does not observe the
state of the edge but instead only knows the probability of an edge
transitioning (transition parameter) between the active and inactive
states and observes communications between 
hosts during a moving time window. ``Attacker behavior'' is modeled as
an increase in the probability of an edge transitioning from an
inactive to an active state.  The authors then compute the
probability of an observed dataset under the hypothesis that the
transition parameter is equal to the calibrated transition parameter
and compare that likelihood to the likelihood under the maximum
likelihood estimate of the transition 
parameter.  When the value of the likelihood at the maximum likelihood
estimate is
sufficiently different from the value of the likelihood under the
calibrated transition parameter, PathScan indicates an intrusion.  Our
novel detector presented below is similar to PathScan in that it uses a
likelihood ratio to detect \emph{within perimeter} anomalies.
However, our detector  is different from PathScan in that it
explicitly captures the the fact that the attacker must traverse the
network when attempting to access key information located on various
nodes in the network.  Furthermore, our work shows how to incorporate
exact information of an attacker's strategy into the likelihood ratio.

It is important to note that likelihood and likelihood ratio
based approaches are not the only characterizations of model-based
anomaly detectors.  For example, \citet{lee2001information} focuses
on information theoretic measures for anomaly detection such as
entropy, conditional entropy, relative 
conditional entropy, information gain, and information cost.
From a learning-based approach, \citet{ryan1998intrusion}
uses artificial neural networks with supervised learning, based on the
back-propagation neural network, called Neural 
Network Intrusion Detector. However, the limitations to supervised
learning as applied to intrusion detection are well documented and are
discussed in \citet{sommer2010outside}.  

A sizable portion of the literature approaches anomaly detection from
the frequency domain.  The work of \citet{wavelet} uses principal
component analysis on wavelet transforms of network traffic to detect
anomalies in backbone networks.  Similarly, the work of \citet{strans}
uses the $S$-transform to convert network traffic data into the
frequency domain and then presents ROC curves that illustrate the
effectiveness of their approach.  

Another approach to anomaly detection is through the use of sequential
hypothesis tests. In a sequential hypothesis test, the test is applied
multiple times as the data are generated.  Data that are included in the
first hypothesis test can be included in hypothesis tests later in
time.  Two examples of sequential based 
hypothesis tests are \cite{sequentialpost} and \cite{cusum2}, who
employ  CUSUM charts to determine when the parameters that govern
network behavior (such as packet transfer rates between hosts) have
changed, which signals a potential intrusion. 


To pinpoint our work in the current literature, we note that our
proposed likelihood ratio detector can be classified as a
within-perimeter (like PathScan), likelihood ratio-based (like
\cite{agganom1}) network intrusion 
detector.  The analysis is performed in the time domain (like
\citet{agganom1} and \citet{neil2013scan}). It does \emph{not} use
sequential hypothesis tests (as in 
\citet{cusum2}), supervised learning (as in \citet{ryan1998intrusion}) 
or information theoretic quantities (as in
\citet{lee2001information}).  Additionally, our work enhances the
current state of the art likelihood ratio intrusion detection systems
in that it shows how to incorporate explicit traversal behavior into
the denominator of the likelihood ratio.  For a more detailed survey of
general anomaly detection, see the surveys of
\citet{chandola2009anomaly}, \citet{lee2000data},
\citet{garcia2009anomaly} and \citet{Kantas2009MC}.

\section{Model}
\label{model}
We model a computer network
as a directed graph, potentially with cycles, where each node (also
referred to as ``host'') represents a computer
or a human  inside the firewall. Each
node has an associated state. Examples of human nodes are users, system administrators, and hackers,
whose states can represent their knowledge, their strategies, etc.
Each directed  edge represents a potential communication directly connecting one node (human or computer)
to another node (human or computer). These edges have associated states, which
represent communication messages. So the computer network evolves according
to a Markov process across all possible joint states of every node and every edge. 


In this initial project, we only consider computer nodes, treating the human using a particular
computer as part of that computer. We also only consider those computers that are inside the firewall.
Each node can be in one of two states, ``normal'' or
``compromised''.  Similarly, each edge can be in one of two states, ``no message",
or ``message in transit". When a node is in a normal state,  it sends benign
messages%
\footnote{A message is also commonly called a ''connection.'' To
  maintain generality, the term message is used to avoid confusion
  with the concept of ``establishing a connection''.
}%
 along any of its directed edges according to an underlying Poisson
process  with a pre-specified rate.  To maintain generality, we
  do not define a message explicitly but only suggest that a message
  can be, among many other alternatives, a remote desktop protocol
  connection or file transfer protocol connection. When a node is
  compromised, it still sends benign 
messages at the same rate as when it is not compromised, but
now it superimposes malicious messages. These are generated according to
another Poisson process, with a much lower rate, thus effectively
increasing the Poisson rate for message emissions out of a compromised
node by a small amount. We use a stochastic model of attacker behavior
because in reality, it is impossible to perfectly predict each action of the
attacker.  In other words, we are not claiming that attackers
are strategically randomizing their actions.  Instead, all we are
claiming is that the only information afforded to the detectors is
that the attacker's behavior ---strategic or otherwise--- can be
described by a Poisson process.  
The task of network defense is to detect the small increases in the
message transmission rates  and decide whether they fall into a
pattern indicative of attack. 

For simplicity we assume
that if an edge from a compromised node to a non-compromised node  gains a new malicious message at time $t$,
then with probability $1.0$ the second node becomes compromised and
the new malicious message disappears immediately, 
leaving a trace on our net-monitoring equipment that that message traveled down that edge at $t$.
No node can become compromised spontaneously, and no node can become uncompromised.

\subsection{Definitions}

Let $G=(V,E)$ be the directed graph of a computer network where
$V = \{v_1, v_2 ... v_N\}$ is the set of nodes  
and $E$ is a set of directed edges that represent communication channels
  between nodes.
Let $\sigma \in {\mathbb{B}}^N$ denote the state of
all nodes in the network and $\sigma_{v_i}$ denote the state of node
$v_i$: $1$ representing the uncompromised state  and
$0$ representing the compromised state. 
The Markov process governing the computer network is parameterized by
the set of Poisson rates $\lambda \equiv \{(\lambda_{v,v',\sigma_v}) :
v, v' \in V, v' \ne v, \sigma_v \in \B\}$ giving the total rate at
which $v$ sends messages to $v'$ when $v$ is in state $\sigma_v$. 
We write the rate parameter for just emission of malicious message
from $v$ to $v'$ as $\Delta_{v, v'} \equiv \lambda_{v, v', 0} -
\lambda_{v, v', 1}$. A directed edge from $v$ to $v'$ exists if and
only if $\lambda_{v, v', s}>0$ for some $s \in \sigma_v$.

Suppose we observe the traffic on a net for a time interval $[0, T]$,
and denote by $(\tau, v, v')$ an observation that a message was added
at time $\tau$ to the edge from  $v$ to $v'$. The resulting dataset
$D = \{(\tau_i, v_i, v'_i)\}$, has $\tau_i \in [0, T]$ and each $(v_i, v'_i) \in V^2$. 
We assume that the observation process is noise-free,
i.e., that all messages are recorded and no spurious messages are. 

In time continuous processes like this, the probability that two nodes
get compromised at exactly the same time is precisely zero; as a
result, we can assume a strict time ordering among the compromised
nodes. For all $1 \le k \le N$ indicating the possible number of node
compromises that occur in $[0, T]$ (though others might occur later),
define $\SSS^k$ as the set of vectors $s \in V^{k}$ such that each
element of $\SSS^k$ uniquely defines an ordering of $k$ nodes that can
become compromised when the network perimeter has been penetrate.
Define $\SSS = \cup_{k=1}^{N} \SSS^k$.
 Also define the space $Z \equiv [0, T] \cup \{*\}$, whose elements
 are either a time of compromise in the observation window $[0,T]$, or a
 $*$ to indicate no compromise occurs in that interval.  We will mostly
 consider vectors $z = (z_{v_1}, z_{v_2}, ... z_{v_m}) \in Z^m$
 specifying the times of compromise of various nodes, and index
 components of the vectors $z$ by the nodes compromised (or not for
 \(*\)) at those times.  
So $z_{s_i}$ is
the time that \(s_i\), the $i$'th node to get compromised, gets compromised, or is a \(*\).


For each pair $(v, v')$, it will be useful to define an associated function
$\kappa_{v,v'}(z_v, D)$ that equals the number of messages recorded in $D$ as
going from $v$ to $v'$ before $z_v$, where for $z_v = *$, this is
interpreted as the total number of such messages 
in the observation window. Similarly define
${\underline{\kappa}}_{v,v'}(z_v, D)$ as the number of messages after
$v$ gets compromised, or $0$ if it never gets compromised. 

Finally, for any $k \leq \N$,  $\tau > 0$, we denote by $\tau^k$ the
subset of vectors $[0, \tau)^k$ whose indices are non-decreasing, i.e.,
  $x \in \tau^{\kappa} \Rightarrow x_i \le x_j \; \forall i, j > i$.
  If exactly \(k\) nodes get compromised in our observation window,
  elements of $\tau^k$ exhaust the possible sequence of times at which
  the nodes are compromised. 
In the discussion below, we use  ``$P( \ldots)$" to refer to either
probabilities or probability densities, with 
the context making the meaning clear.

Our likelihood ratio detector is based on comparing the probability of
$D$ under two different Poisson processes: 
one where there is no attack and one in which there is an attacker
at node $v_1$ at time $0$. An anomaly detector only considers the
first of these probabilities. 
Whether or not there is an attack, the probability of our dataset
conditioned on $z$ can be calculated as follows.  For each pair of
nodes \((v,v')\), \(\kappa_{v,v'}(z,D)\) messages flow from \(v\) to
\(v'\) in the period \([0,z_v]\) when the source was uncompromised
(and hence emitting at a rate \(\lambda_{v,v',1}\)), and if the node
gets compromised at $z_v$, then \(\underline\kappa_{v,v'}(z,D)\) is
the number of messags from $v$ to $v'$ in the period \([z_v, T]\).  As
all the emissions are independent, the net conditional probability
density is given by the product of these factors:     
%
\begin{align}
\label{eqn:ProbData}
P(D  \mid  z) =&  \prod_{v \in V}\prod_{v'\in V, v' \neq v}\bigg[ (1 - \delta_{z_v, *})\frac{e^{-{z_{v}}\lambda_{v, v', 1}}(z_{v}\lambda_{v, v', 1})^{\kappa_{v ,v'}(z, D)}}{\kappa_{v, v'}(z, D)!} \times \nonumber \\
&\frac{e^{-{(T-z_{v})}\lambda_{v, v', 0}}((T-z_{v})\lambda_{v, v', 0})^{{\underline{\kappa}}_{v ,v'}(z, D)}}{{\underline{\kappa}}_{v, v'}(z, D)!}  + \nonumber \\
&(\delta_{z_v, *}) \frac{e^{-{T}\lambda_{v, v', 1}}(T\lambda_{v, v', 1})^{\kappa_{v ,v'}(z, D)}}{\kappa_{v, v'}(z, D)!}\bigg]
\end{align}
where $\delta_{a, b}$ indicates the Kronecker delta function, and in
particular, $\delta_{z_v, *}$ equals 1 
if node $v$ is not compromised in the window $[0, T]$ and $0$ otherwise.
In the case of no attack, it is only the second summand that survives in every term, giving
the probability of $D$ given that there is no attack is
\begin{eqnarray}
\label{eqn:baseline}
P(D \mid  z = \vec{*}) =  
\prod_{v \in V}\prod_{v'\in V, v' \neq v}\frac{e^{-T\lambda_{v, v', 1}}(T\lambda_{v, v', 1})^{\kappa_{v ,v'}(z, D)}}{\kappa_{v, v'}(z, D)!} 
\end{eqnarray}
where $\vec{*}$ is the vector of all $*$'s. This is the only
probability considered by an anomaly detector, and is the 
first of the two probabilities considered by our likelihood ratio
detector. 

In our initial project, we assume that if an attacker is present in the observation window, at time $0$
they have compromised a particular node $v_1$ and no other node (In a full analysis we would
average over such infection times and the nodes where they occur according to some prior
probability, but for simplicity we ignore this extra step in this paper.). Accordingly, 
$z_{v_i} > 0 \;  \forall i > 1$  (whether there is an attacker or
not), and the second of the two probabilities 
we wish to compare is $P(D \mid z_{v_1} = 0)$. 

Unfortunately, our stochastic process model gives us $P(D \mid z)$, where
\(z\) specifies the times of infection of all nodes compromised in the
observation window.  To obtain, $P(D \mid z_{v_1} = 0)$, which only
specifies the  time of infection of the first node to be compromised,
we need to integrate over the other infection times, this yields the integral:
%
\begin{eqnarray}
&P(D \mid z_{v_1} = 0) = \nonumber \\
&\sum_{s \in \SSS} \int_{T^{|s|}} d \z \; P(D \mid \z, s) P(\z, s \mid \z_{1} = 0, s_1 = v_1) 
\label{eqn:integral}
\end{eqnarray}
\noindent 
The first probability density in equation \ref{eqn:integral}, $P(D \mid \z, s)$, is 
given by writing $z_{s_i} = \z_i$ for all $i \le |s|$, all other $z_v = *$,
and plugging
into equation \ref{eqn:ProbData}. (N.b., $\z$ is indexed by integers,
and $z$ by nodes.) 
The second probability density is the dirac measure
\(\delta_{s_1v_1}\delta(\bar z_1)\) if $|s| = 1$. For other $s$'s 
we can evaluate by iterating the Gillespie
algorithm~\citep{gillespie1977exact}: 
\begin{proposition}
As shorthand write ``$v \not\in s$" to mean $\forall i \le |s|, s_i \ne v$. 
For any $s,\z \in T^{|s|}$ where $|s| > 1$,
\begin{eqnarray*}
&P(\z, s \mid \z_{1} = 0, s_1 = v_1) = \nonumber \\
&\prod_{v \not \in s} e^{-  (T - z_{|s|}) \sum_{i \le |s|} \Delta_{s_i, v}}
\prod_{j=1}^{|s| - 1} \Delta'_{s, j+1} e^{\lambda'_{s, j} (\z_{j+1} - \z_j)}
\end{eqnarray*}
where
$\lambda'_{s, k} \equiv \sum_{i=1}^k \sum_{v \not\in \cup_{j=1}^k \{s_j\} : (s_i, v) \in E} \Delta_{s_i, v}$
\newline and $\Delta'_{s, k} \equiv \sum_{i=1}^{k-1} \Delta_{s_i, s_k}$ .
\label{prop:second_term}
\end{proposition}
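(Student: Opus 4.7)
The plan is to derive the joint density of $(\z, s)$ by iterating the Gillespie algorithm for the attacker spread. Because the malicious emissions along different directed edges are independent Poisson processes, a trajectory of the spread decomposes into a product of inter-compromise waiting-time densities together with a final survival factor for the uncompromised nodes.

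First I would fix $j$ with $1 \le j < |s|$ and analyze the interval $(\z_j, \z_{j+1})$. Conditional on $\{s_1, \ldots, s_j\}$ being exactly the compromised set at time $\z_j$, each already-compromised $s_i$ emits malicious messages to each $v$ with $(s_i, v) \in E$ as an independent Poisson process at rate $\Delta_{s_i, v}$. By Poisson superposition, the aggregate rate at which \emph{any} not-yet-compromised node becomes newly infected equals $\lambda'_{s, j}$, while the rate at which the specific node $s_{j+1}$ is the one infected equals $\Delta'_{s, j+1}$. The standard competing-exponentials argument then gives that (waiting time $= \z_{j+1} - \z_j$, next infected node $= s_{j+1}$) has joint density $\Delta'_{s, j+1} \exp(-\lambda'_{s, j}(\z_{j+1} - \z_j))$, matching the factor inside the second product of the statement.

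Second I would appeal to the strong Markov property of the underlying Poisson processes to chain these inter-compromise densities together across $j = 1, \ldots, |s| - 1$, yielding the second product in the stated formula. Third, I would handle the interval $(\z_{|s|}, T]$, in which by hypothesis no further compromise occurs. For each $v \not\in s$, the aggregate rate of malicious arrivals at $v$ is $\sum_{i \le |s|} \Delta_{s_i, v}$, so the probability that none arrives in a window of length $T - \z_{|s|}$ is $\exp\bigl(-(T - \z_{|s|}) \sum_{i \le |s|} \Delta_{s_i, v}\bigr)$. Independence across target nodes $v$ lets these survival probabilities multiply, producing the first product in the statement.

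The main obstacle is careful bookkeeping rather than any deep probabilistic argument: I must ensure that at each stage the rate $\lambda'_{s, j}$ accounts for only the currently relevant edges, namely those from already-compromised sources to not-yet-compromised targets, and that $\Delta'_{s, j+1}$ sums only over valid predecessors of $s_{j+1}$. I also need to verify that the final product ranges exactly over nodes that are never compromised in $[0, T]$, so that the survival and event factors jointly exhaust the contributions of every edge in $E$. Once this accounting is pinned down, the formula follows directly from superposition and competing-exponentials.
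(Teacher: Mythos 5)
Your proposal is correct and follows essentially the same route as the paper's own proof: an iterated Gillespie-style decomposition into inter-compromise intervals, with superposition giving the aggregate rate $\lambda'_{s,j}$, competing exponentials giving the factor $\Delta'_{s,j+1}e^{-\lambda'_{s,j}(\z_{j+1}-\z_j)}$, and a final survival factor over the never-compromised nodes. (Your inclusion of the minus sign in the exponent matches the derivation in the paper's proof, which is evidently the intended reading of the statement.)
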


\begin{proof}
To begin, expand 
\begin{align}
&P(\z, s  \mid  \z_{1}  = 0, s_1 = v_1) = P(\z_{2}, s_2  \mid  \z_{1}=0, s_1 = v_1) \; \times \nonumber \\
& P(\z_{3}, s_3  \mid \z_{2}, s_2,  \z_{1} = 0, s_1 = v_1) \; \times \ldots 
\label{eqn:expansion}
\end{align}

To evaluate the first term on the right hand side, $P(\z_{2}, s_2  \mid  \z_{1}=0, s_1 = v_1)$, 
expand the aggregate
rate of a malicious message leaving node $s_1$ if that node is
compromised as $\lambda'_{s, 1}$. 
The probability that node $s_1$ sends a malicious message to $s_2$
before sending one to any other node is  
$\frac{\Delta'_{s, 2}}{\lambda'_{s, 1}}$.  Also, the probability that
$s_1$ sends its first malicious message at time $\z_{2}$ is  
$\lambda'_{s, 1}e^{-\lambda'_{s_1}(\z_{2} - \z_1)}$.  Furthermore, the
time homogeneity of Poisson processes imply that the time to the first  
malicious message and the node to which it is sent are statistically
independent.  Therefore 
\begin{align}
&P(\z_{2}, s_2  \mid  \z_{1}=0, s_1 = v_1) = P(s_{2} \mid \z_2,  \z_{1}=0, s_1 = v_1) \; \times \nonumber \\ 
& P(\z_2  \mid \z_{1}=0, s_1 = v_1) \nonumber \\
&= P(s_{2} \mid  s_1 = v_1) P(\z_2  \mid \z_{1}=0, s_1 = v_1) \nonumber \\									
&= \Delta'_{s, 2} e^{-  \lambda'_{s, 1} (\z_{2} - \z_1)}
\label{eqn:FirstTime}
\end{align}

Next we similarly expand 
$P(\z_{3}, s_3  \mid \z_{2}, s_2,  \z_{1} = 0, s_1 = v_1) = P(s_3  \mid s_2,  s_1) P(\z_3  \mid \z_{2}, s_2, s_1) $. 
The set of edges that lead from
either $s_1$ or $s_2$ to some currently uncompromised node is 
$\cup_{v \ne s_1, s_2 \; : \; (s_1, v) \in E \text{ or } (s_2, v) \in E}$.
The sum of the malicious message rates of those edges is  $\lambda'_{s, 2}$
Therefore 
we have $P(s_3  \mid s_2,  s_1) = \Delta'_{s, 3} / \lambda'_{s, 2}$ and
$P(\z_3  \mid \z_{2}, s_2, s_1) =  \lambda'_{s, 2} e^{ \lambda'_{s, 2} (\z_3 - \z_2)}$,  so that
\begin{eqnarray*}
P(\z_{3}, s_3  \mid \z_{2}, s_2,  \z_{1} = 0, s_1 = v_1) &=&\Delta'_{s, 3} e^{- \lambda'_{s, 2} (\z_3 - \z_2)}
\end{eqnarray*}
Iterating through the remaining components of $s$ gives the second
product term on the right hand side in the claimed result. 
The first product term then arises by considering the time interval between $\z_{|s|}$ and $T$, 
during which no nodes $v$ not listed in $s$ receive a malicious
message from any of the nodes that are listed in $s$.
\end{proof}

To evaluate our likelihood ratio attack detector we need to plug the
results of Proposition \ref{prop:second_term} and
equation \ref{eqn:ProbData} into \ref{eqn:integral}, evaluate that
integral, and then divide by the 
likelihood given in equation \ref{eqn:baseline}.  
%
%


We acknowledge that in this specification, we are sacrificing some
model accuracy for computational tractability.  For example, we assume
we know which node an attacker initially infects upon infection, which
in real network scenarios is a plausible but unlikely scenario.  It
is possible to compute the probability of an attack when the initially
infected node is unknown by averaging the likelihoods for all
possibly infected initial nodes over some prior probability of infection.
Furthermore, we assume that in any given realization, there is at most 
one attacker present when in reality there can be any number of
attackers present.  Assuming the attackers act independently,
computing such a likelihood given two attackers are present would
involve elaborating our  model so that each 
node can be in state ``normal'', ``infected by attacker 1'',
``infected by attacker 2'' or ``infected by both attackers.''  We
would then need to integrate over the times the nodes change states.
Although this is possible and the requisite mathematics would be very
similar to the model presented above,  the essence of our
contribution is best illustrated with the simplest attacker model.
Future work will focus on adding in the intricate details of
real-world computer networks.

\subsection{Computational approximations}
\label{computational}

To use our likelihood ratio attack detector, we need to evaluate
equation \ref{eqn:integral}.
To do this we express it as the expected value of $P(D \mid \z, s)$ over all $\z$ and $s$, 
evaluated under the multivariate distribution $P(\z, s \mid \z_1=0, s=v_1)$.  We then reformulate
that expectation value, in a way that allows us to approximate it via
simple sampling Monte Carlo~\citep{roca04}.

To begin, we consider a new network $(V, E')$ created from our original network $(V, E)$ by adding
enough new edges to those in $E$ so that $V$ contains a (directed)
path from $v_1$ to every node in $V$. We leave
rates of both benign and malicious edges on all of the old edges (i.e., on all $e \in E \subseteq E'$) unchanged.
Define some strictly positive value  $\tilde{\lambda}$ so that both  $T \tilde{\lambda} N^2$ is
infinitesimal on the scale of 1 and so that $\tilde{\lambda}$ is infinitesimal on the scale of the smallest
rate in the original network. This ensures that the probability that any non-empty data set $D'$ generated 
with our new net has a message traverse one of the new edges before time $T$ is infinitesimal. 
This in turn means that the likelihood of any non-empty $D$ generated
with the new net is the same as its likelihood with the original net,  
whether we condition on there being an attacker or on there not being one. 

Recall that in the original network, there exists an edge between
nodes if and only if the rate of message transmission along the directed edge
is positive.  Furthermore, in the new net used to compute the value of the
likelihood, the only new edges are from $v_1$ to all nodes that did not
receive communication from $v_1$ in the original net.  Furthermore,
the rate of communication along these new edges is modeled as a
Poisson random variable with a positive but infinitesimally small rate
parameter.  Therefore,  we are still considering Poisson processes
with the new net, and all Poisson rates are greater than zero 
on all edges in the new net.
Combining this with
the fact that there is a path in $E'$ from $v_1$ to every node $v \in V$, we see that if $v_1$ is compromised
in the new net, then every node  in the new net gets compromised at
some finite time, with probability 1. This allows us to re-express
equation \ref{eqn:integral} as
\begin{eqnarray}
\int_{\infty^N} d\z \sum_{s \in \SSS} \delta_{|s|, R(\z)} P(D \mid \z, s) P(\z, s \mid \z_{1} = 0, s_1 = v_1)
\label{eqn:new_coordinates}
\end{eqnarray}
where $R(\z)$ is the number of components of $\z$ that are less than or equal $T$.
It is this expectation value that we approximate with simple sampling. 

Since it is the product  $\delta_{|s|, R(\z)} P(\z, s \mid \z_{1} = 0, s_1 = v_1)$ that is a normalized
distribution for this new integral's  regions of integration, we must sample from that.  To do this, we
iterate the expansion of \(P(\z, s \mid
\z_{1} = 0, s_1 = v_1)\) in equation \ref{eqn:expansion}, multiplying by the Kronecker delta function at each step.
Note that due to that Kronecker delta function, whenever
we reach an iteration $i$ where the sample $\z_i$ we generate is greater than $T$, 
before evaluating $P(D \mid \z, s)$ we first pad all components of $\z$ at $i$ or later to be ``*", and
set $s$ to be the current list. After evaluating $P(D \mid \z, s)$
for that $\z$ and $s$, we break out, and form a new sample of $P(\z, s \mid \z_{1} = 0, s_1 = v_1)$.

As an illustration, to sample the term $\delta_{|s|, R(\z)} P(\z_2, s_2 \mid \z_1 = 0, s_1 = v_1)$, we
%
first set $s_1 = v_1, \z_1 = 0$, and then sample $\lambda'_{s, 1}e^{-\lambda'_{\hat{s}_1}(\z_{2} - \z_1)}$ to get a value of $\z_2$.
If that $\z_2>T$, then we break and start generating a new sample.
Otherwise we sample $s_2$ according to $\frac{\Delta'_{\hat{s}, 2}}{\lambda'_{\hat{s}, 1}}$, and then iterate
to generate a sample of $P(\z_3, s_3 \mid \z_2, s_2, \z_1 = 0, s_1 = v_1)$.

\section{Experimental results}
\label{results}
We now present receiver operating characteristics (ROC) curves for
various network topologies, message transmission rates and 
observation windows. The results cover a wide range of typical network
structures and attacker behavior. 
These experimental results provide strong evidence that the likelihood
ratio detector
significantly outperforms state-of-the-art techniques based on anomaly
detection, irrespective of network topology and attacker strategy. 

An ROC curve is a two-dimensional plot that compares the true positive and false positive rates of 
a binary classifier. For a given threshold, the true positive rate is calculated as
$\frac{\text{True positives}}{\text{Total positives}}$ and the false
positive rate as $\frac{\text{False Positives}}{\text{Total
    Negatives}}$. These values are plotted for different threshold
choices to create a curve. When comparing ROCs, the \textit{higher}
the curve, the better.  


\begin{figure}
\centering
\includegraphics[width=.3\textwidth]{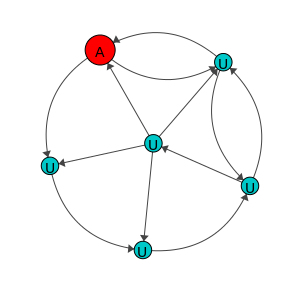}
\caption{A small, simple network.  The  node labeled ``A'' is the attacker.  The ``U'' nodes are the normal
users of the network.}
\label{fig:SmallStar}
\end{figure}

Our experiments are designed to represent stylized enterprise networks where the attacker has
penetrated the perimeter and can begin traversing the network. 
For example, the attacker may have compromised the computer of a credit card customer service
representative at a major bank.   However, he is not interested in the information available on that machine
but is interested in account information held on a central server.  As a result, he must emit messages from the
originally infected machine in order to gather credentials and elevate privileges until he has access to
valuable information.  This is however only one possible narrative and the experimental environment can
represent any number of enterprise attacks.  

In each of our experiments, we generated $400$ realizations of network activity over an observation window 
length $T$. There are $200$ cases with an attacker and $200$ cases without an attacker.
Recall that the dataset available to both of the detectors is a
  collection of triples of the form $\left(v_i, v'_i,\tau_i\right)$ indicating
  that a message traversed an edge from $v_i$ to $v'_i$ at time
  $\tau_i$.  From such a dataset it is possible to compute the
  likelihood ratio as well as the baseline likelihood detection
  criteria, which are both given in section \ref{model}.
For each realization, we compute the likelihood that the observed
message transmissions come from a system with no attacker 
and the likelihood that the observed message transmissions come from a
system with an attacker. The likelihood ratio classifier proposed in
the previous section is the ratio of the two
likelihoods. The ROC curve for anomaly detection uses only the likelihood
of no attacker as the classifier. 


In our first experiment, we analyze a small network as shown in
figure~\ref{fig:SmallStar} for  $T=1500$.  
The ``star'' formation  in figure \ref{fig:SmallStar}
  is a typical subgraph of many 
  computer networks.  More generally, the experiments in this section
  use topologies that are stylized representations of small sections
  of real-world subgraphs (such as a ``star'' or a ''caterpillar'').
  However, these topologies may not be realistic of all real-world
  networks.  To ensure that our results hold for more realistic,
  larger networks, the ``Real Data Experiments'' section extracts
  real-world network topologies  from the computer network at Los
  Alamos National Lab.

Normal message traffic
rates are set to $1$ and  
malicious message transmission was set to $3\%$ of the normal rate,
which models a relatively slow attack.  Under this scenario, the
attacker remains on the network for long periods of time and traverses
the network sporadically. \%
Note that in these initial experiments, we
  use homogeneous rates of benign message transmission.
  This allows us to investigate the effects on the ROC curves of
  changing other model parameters (such  as the observation window or
  rate of malicious  message  transmission) without having to consider
  how the effects depend on rate parameter heterogeneity.  
  Nevertheless, the ``Real Data Experiments''  section considers a high degree
  of rate heterogeneity and confirm  the superiority of the likelihood
  ratio detector.



\begin{figure}
\centering
\begin{minipage}[t]{0.45\textwidth}
\includegraphics[width=.9\textwidth]{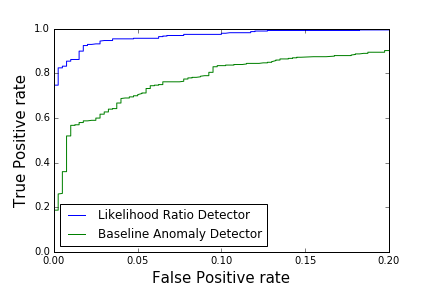}
\caption{ROC curves for the network topology shown in figure~\ref{fig:SmallStar}.}
\label{fig:roc1}
\end{minipage} \hspace{0.05\textwidth}
\begin{minipage}[t]{0.45\textwidth}
\includegraphics[width=.9\textwidth]{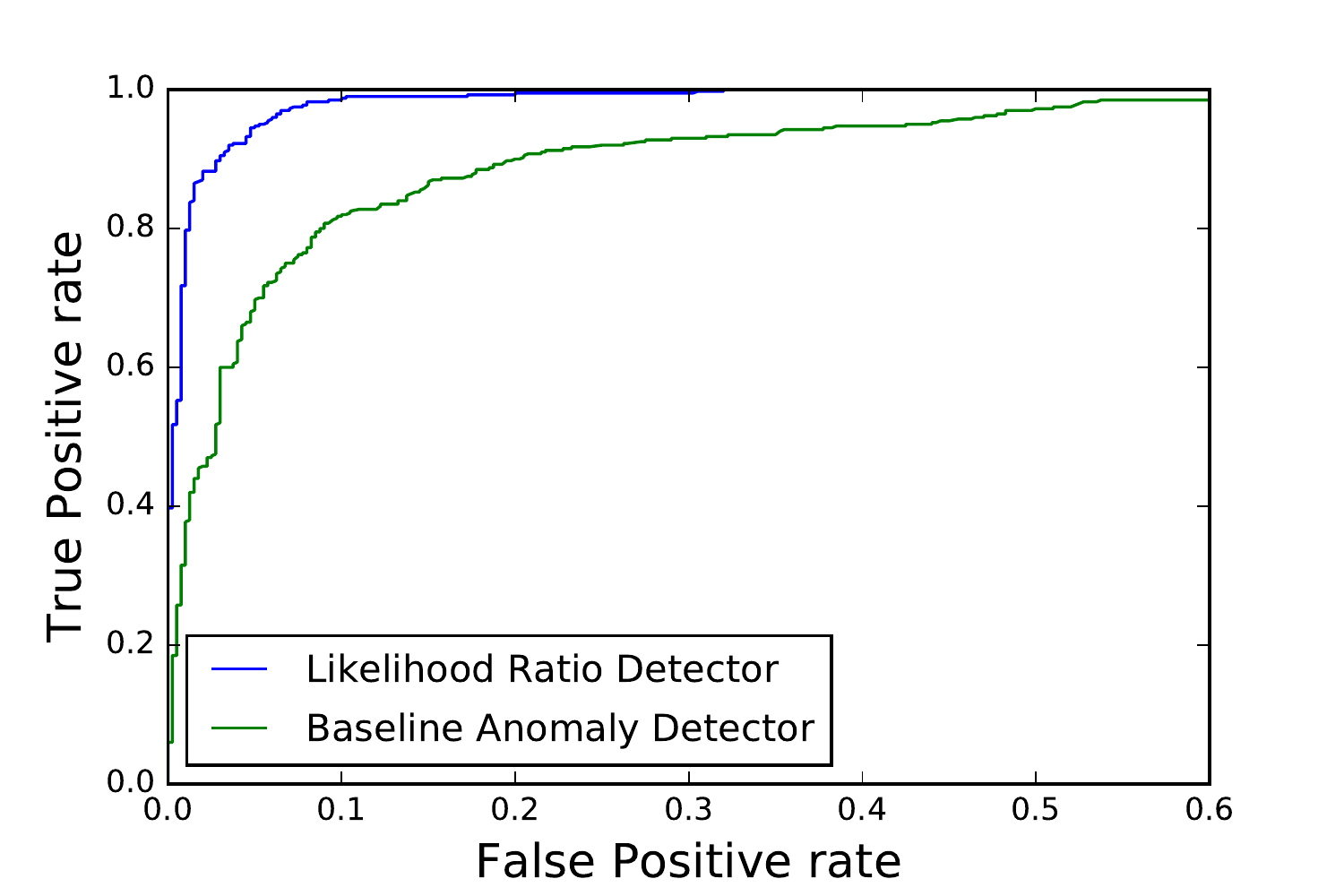}
\caption{ROC curve under the ``enter slowly, traverse quickly'' specification.}
\label{fig:roc2}
\end{minipage}
\end{figure}  
Figure~\ref{fig:roc1} shows that  the likelihood ratio detector outperforms 
the baseline anomaly detector.  That is, for any false
positive rate the likelihood ratio detector has a higher true positive
rate than the simple anomaly detector.  Furthermore, figure
\ref{fig:roc1} shows that the superiority of the likelihood ratio
detector is clear even for low values of the false positive rate.
This is important because in real world networks, limited resources
only allow security response teams to investigate a small number of
instances and thus must set the threshold to a relatively small rate
of false positives.  In other words, if the likelihood ratio
detector's superiority was only apparent for relatively high rates of
false positives, its benefits would not be realized in practice.
However, since the likelihood ratio detector's advantage is evident
for low false positive rates, the initial results suggest that it
would improve practical anomaly detection.  


\begin{figure}[htb]
\centering
\includegraphics[width=.45\textwidth]{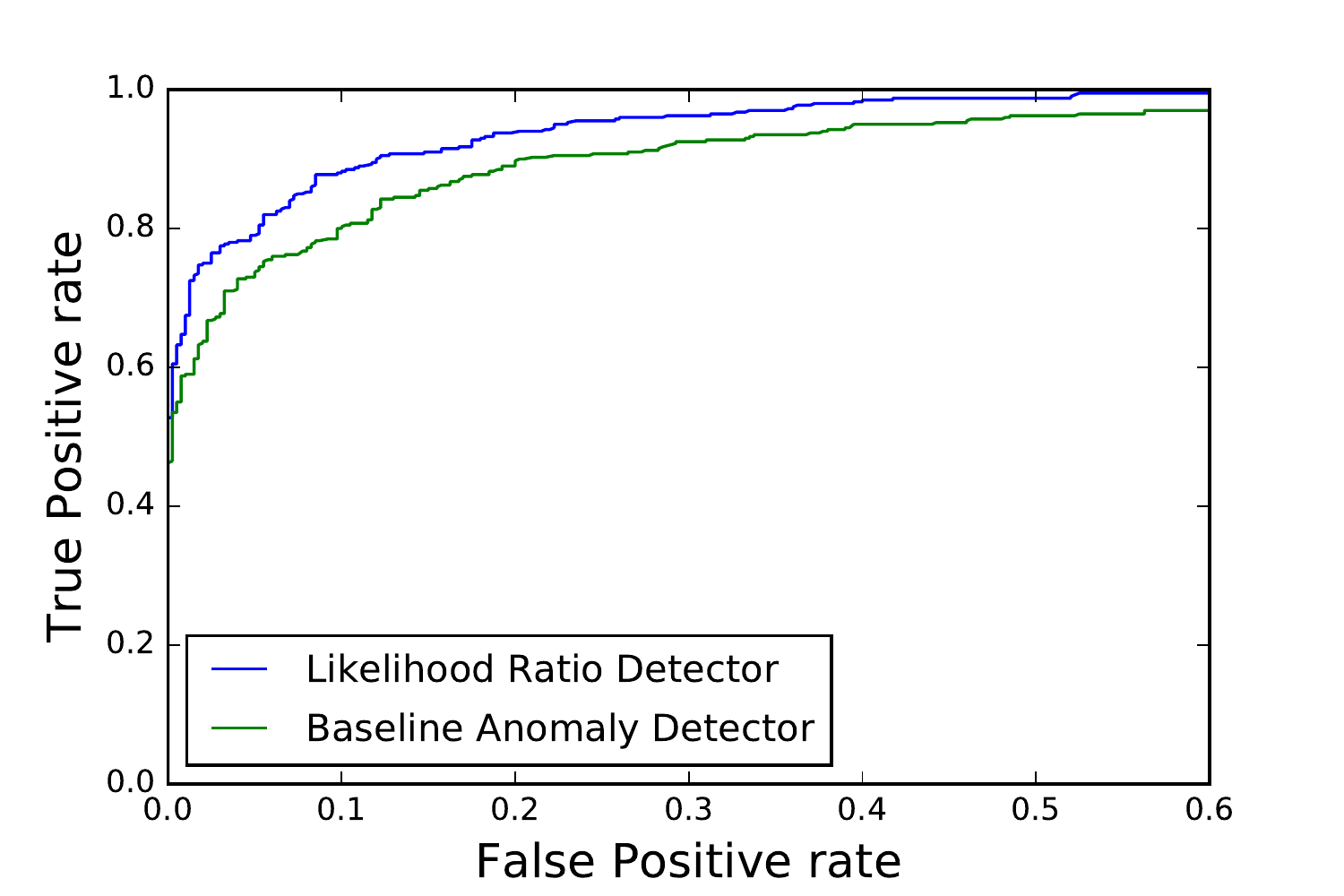}
\caption{ROC with $T=10$.}
\label{fig:roc3}
\end{figure}

In our second experiment, we consider the same network but adopt an
``enter slowly, traverse quickly'' strategy for the attacker and use
$T=400$. 
In this scenario, the attacker initially
sends messages at a rate of $3\%$ of the initial compromised node's normal transmission rate.  Once inside the
network, the attacker traverses the network rapidly by sending
messages at a rate of $6\%$ of the normal message rate.  
Figure~\ref{fig:roc2} once again shows that the likelihood 
ratio detector dominates the simple anomaly detector.  Like figure
\ref{fig:roc1}, the dominance of the likelihood ratio detector is
evident for low (0) levels of the false positive rate.

\begin{figure}
\centering
\includegraphics[width=.45\textwidth]{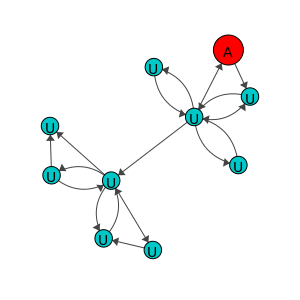}
\includegraphics[width=.45\textwidth]{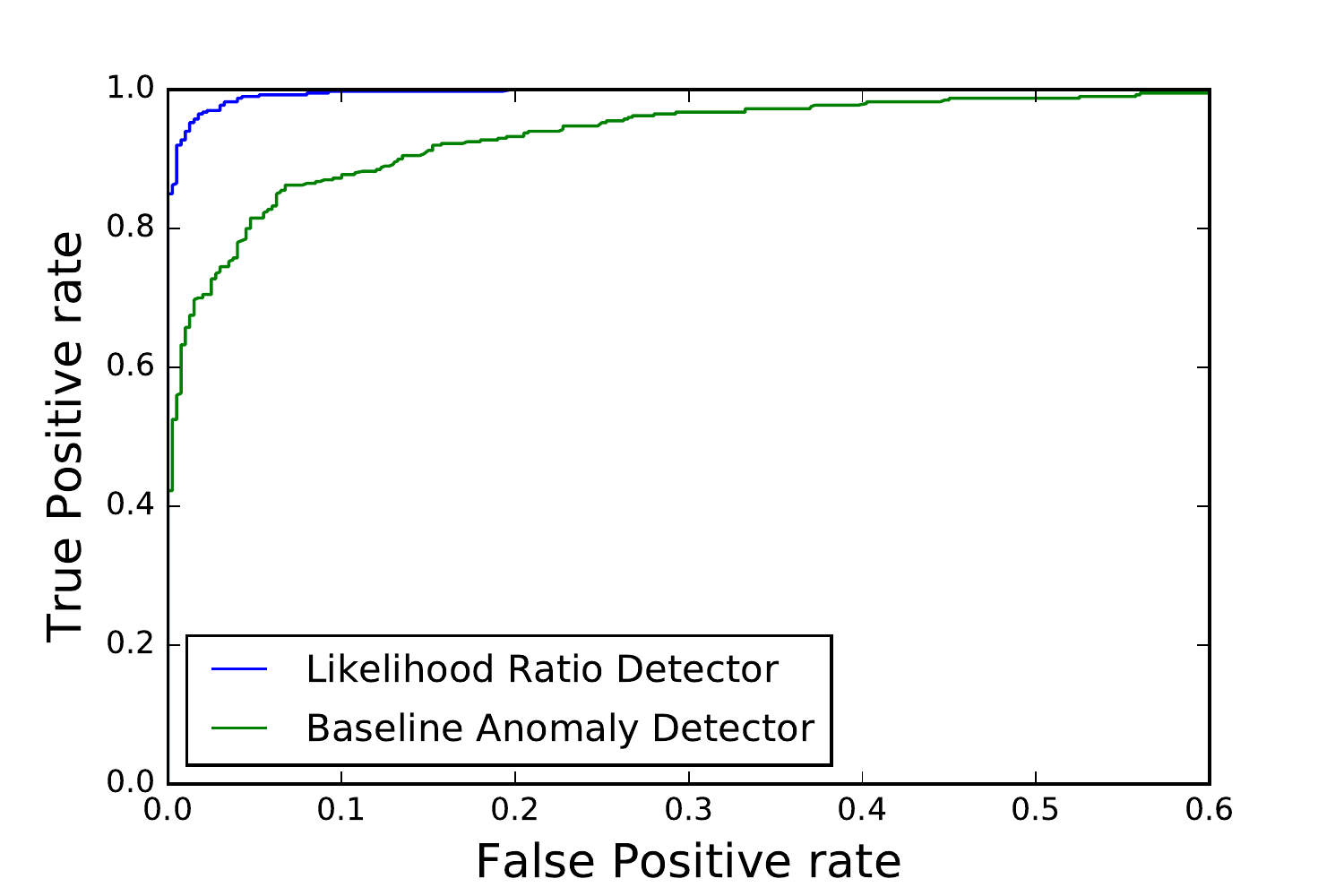}\\
\includegraphics[width=.45\textwidth]{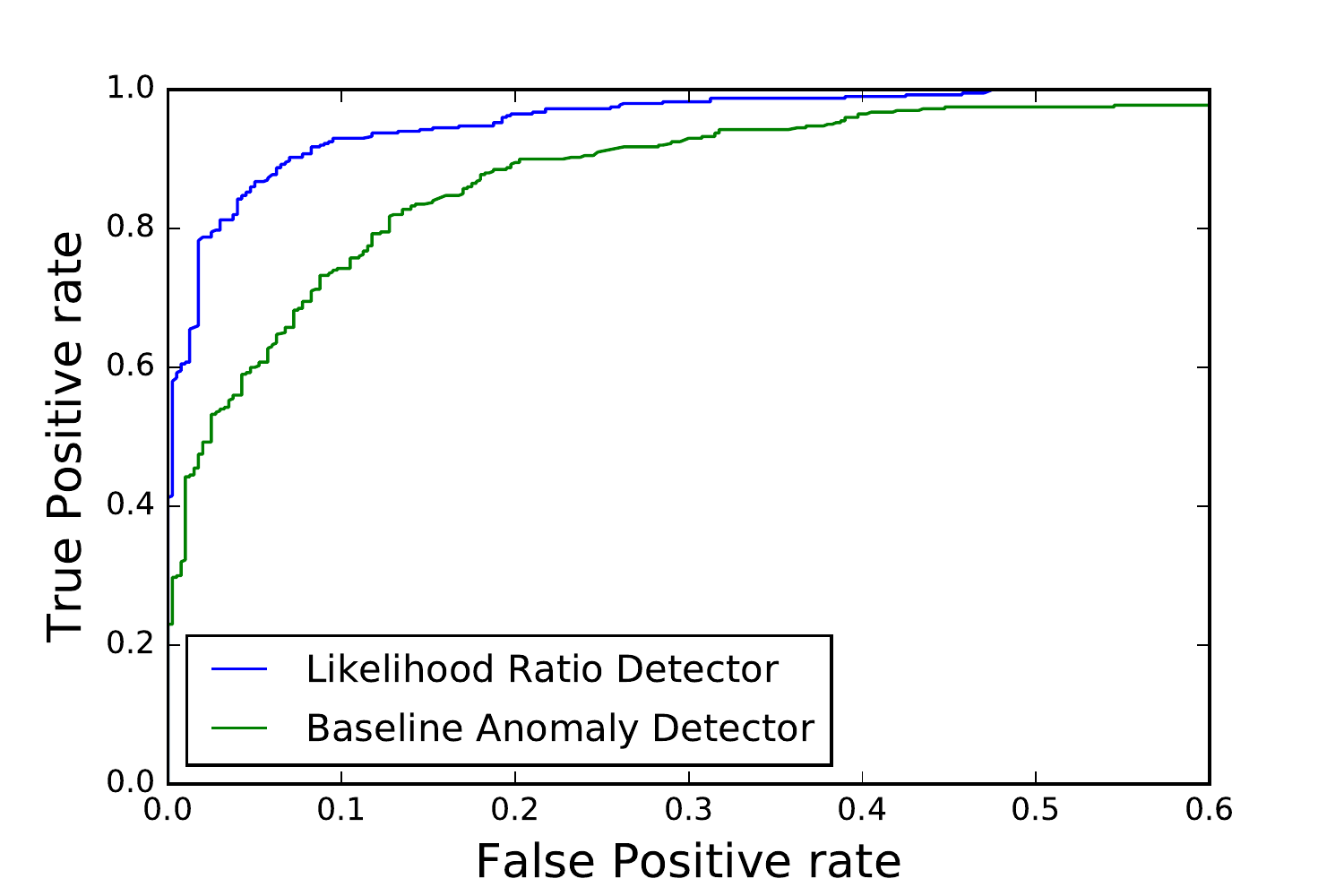}
\includegraphics[width=.45\textwidth]{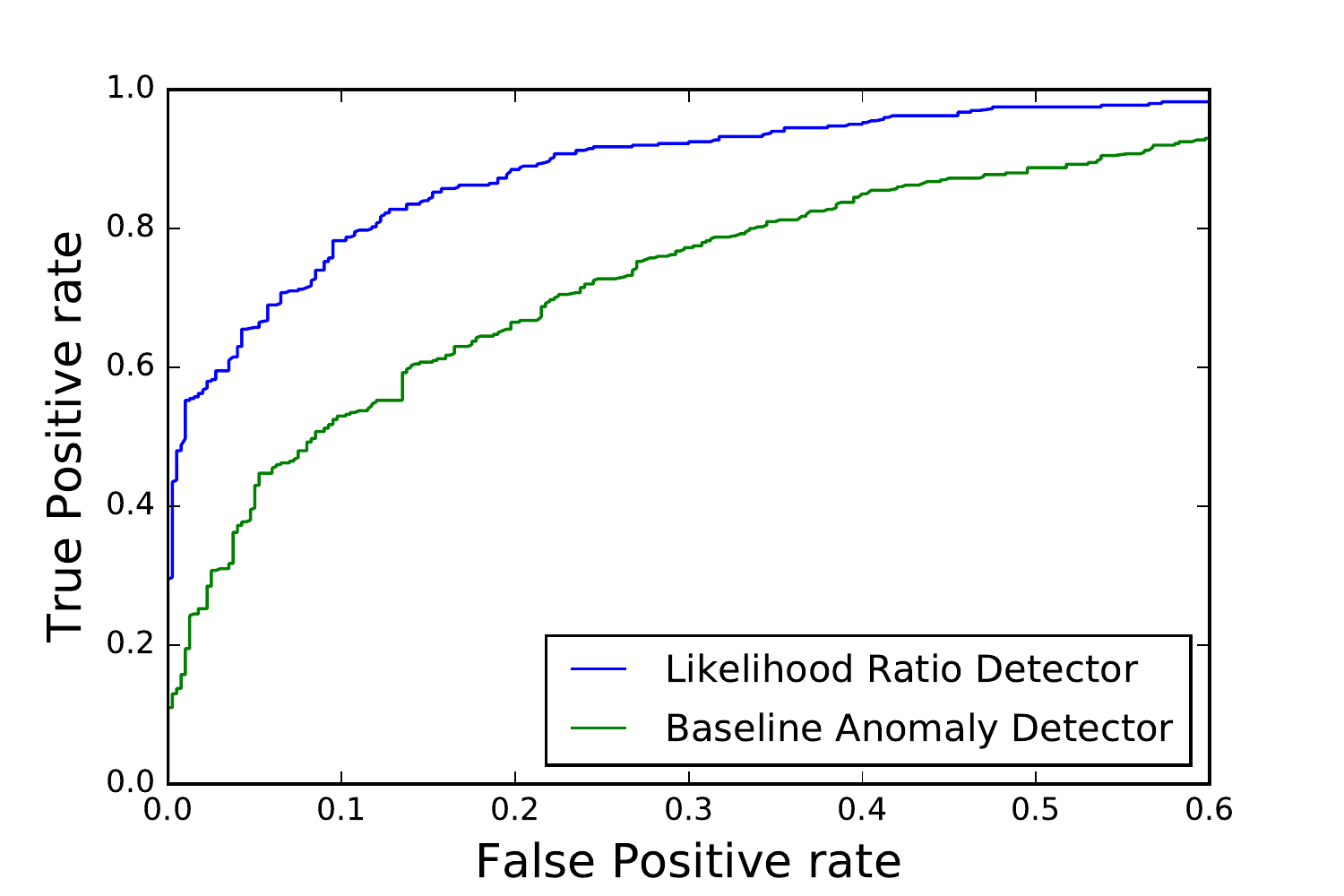}
\caption{Network topology (upper left) and ROC curves for various sets of parameters: 
 Upper right: $T=800$ and all rates 3\% normal rates. Lower left:
 $T=50$ and the rate of malicious messages out of the initially
 infected node is 3\% the normal message rate while malicious messages
 out of all other nodes is 6\% the benign rate.  
 Lower right: $T=10$ and malicious rates are 50\% of benign rates except for nodes
 with more than 2 outgoing edges.  For those nodes, the rate of
 malicious messages is 10\% of benign messages. }
\label{fig:SecondSet}
\end{figure}

In our third experiment, using the same network and $T=10$, we model
an attacker that makes no attempt to ``hide'' from the detectors, but
instead tries  
to traverse the network fast enough so that by the time an alarm
sounds, the network has already been compromised. Malicious message
rates are set to half the normal traffic rate. 
Figure~\ref{fig:roc3} provides the third validation
that the likelihood ratio detector performs better than the anomaly
detector and indicates that it is possible to detect the attacker
before he reaches his goal.

\begin{figure}[htp]
\centering
\includegraphics[width =.45\textwidth]{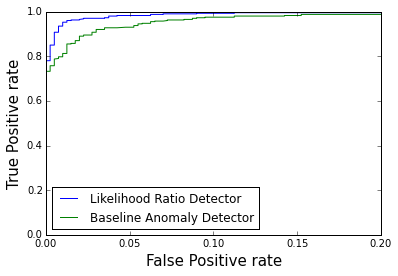}
\caption{ROC curve under the scenario where the attacker becomes more aggressive as he approaches the goal.}
\label{attackprogress}
\end{figure}
The fourth experiment employs the same three attacker heuristics but
this time  for the larger network described in figure
\ref{fig:SecondSet}.  All benign message transmission rates are once
again $1$ and the rate of malicious messages is given in the caption
of the figure.  The motivativation for this experiment  is
two-fold.  First, it is used to verify that the results of the
previous experiments were not specific to a certain network topology.
Secondly, the larger network contains more nodes and more edges and
thus requires more computational resources to compute the likelihood
ratio.  Therefore, the larger network serves as a test bed to see
if the likelihood ratio detector remains superior under a modest
scale increase.  
The ROC curves indicate that
the likelihood ratio detector outperforms the anomaly detector under
all three specifications in this case as well.   

A final test examines the performance of  the likelihood ratio
detector when the attacker's strategy is to increase  
in aggression as he approaches his end target.
Figure \ref{fig:misspecif}  shows a network where the  
attacker moves toward a goal (node 'G').  All nodes send normal
messages at rate $1$.  At node 'A',  
the attacker sends malicious messages at rate $.05$.   
For each subsequent node infection, the malicious message rate
increases by $.05$.   The observation window is set to $T=12$.  
Once again, figure \ref{attackprogress} shows that the likelihood ratio detector outperforms
the simple anomaly detector.  This is especially promising since the
advantage of the likelihood ratio detector is evident for very low
false alarm rates (note the $x$-axis ends at $.2$).  Additionally, it
shows that the likelihood ratio detector outperforms the simple
anomaly detector, even when the simple anomaly detector performs
relatively well.  (The anomaly detector detects 90\% of attacks with
a false positive rate of  about $.05$ in figure \ref{attackprogress}
compared to a false positive rate of $.5$ in the top right corner of
figure \ref{fig:roc2}). These results along with the previous
experiments suggest that the superiority of the likelihood ratio
detector exists regardless of the performance of the simple anomaly
detector.

\paragraph{Model misspecification}


The preceding results assumed that the rate of malicious message
transmission and attacker strategy is known. That is, the parameters
used to compute the likelihood ratio test statistic were the rates
that the attacker actually used.
In reality this is not usually the case.  Therefore, to test the performance of the likelihood ratio detector
in a more realistic scenario we assume that the attacker's strategy is
misspecified and test the performance of the likelihood ratio.
In this scenario, we use the network described in Figure~\ref{fig:misspecif} with $T=10$ and we set 
all normal message rates equal to $1$ and compute the likelihood ratio as if the rate of malicious 
messages is $.5$  on edges exiting any compromised nodes.  This model of the
attacker allows the detector to hedge for an attacker that could
choose any path from $A$ to $G$. 
In our experiments, the actual attacker only traverses the center path (using rate $.5$).
Figure~\ref{fig:ROCmis1} shows that even under misspecification,
the likelihood ratio detector is superior to the baseline anomaly
detector. 
In this specific example, the superiority is most pronounced
around a false positive rate of $.5$.  This is too high to be
practically significant but there are also modest performance
improvements for low false positive rates.  Furthermore, because the
likelihood ratio test uses incorrect parameters in the alternative
hypothesis, there is not theoretical guarantee that the likelihood
ratio outperforms the simple anomaly detector.  However, this example
shows that for any threshold, the likelihood ratio does at least as
well as the simple anomaly detector.

\begin{figure}
\centering
\begin{minipage}[b]{0.45\textwidth}
\centering
\includegraphics[width=.95\textwidth]{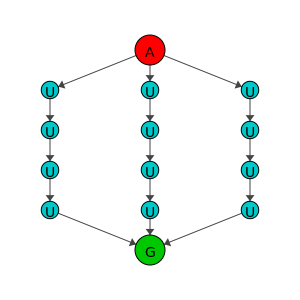}
\caption{Network topology with misspecified attacker strategy.}
\label{fig:misspecif}
\end{minipage}\hspace{0.05\textwidth}
\begin{minipage}[b]{0.45\textwidth}
\centering
\includegraphics[width=.95\textwidth]{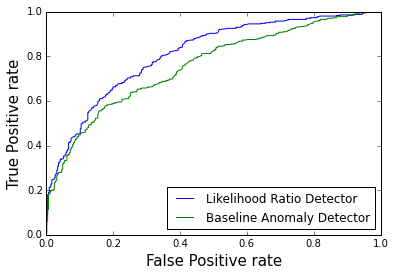}
\caption{ROC curve when the model allows the attacker to take all paths, but he only takes one.}
\label{fig:ROCmis1}
\end{minipage}
\end{figure}

To further test model misspecification, we consider the net in figure
\ref{fig:meanzeronoise}.  This topology is a stylized version of commonly
observed attacker behavior as noted in ~\cite{neil2013scan}  In short, an
attacker  sends messages to  all hosts connected to a host he has already 
compromised.  After exploring the hosts connected to the originally
compromised host, he will then occupy another host and then begin
exploring from that host.  This generates what is called a
``caterpillar'' pattern of malicious behavior.

\begin{figure}[ht]
\centering
\includegraphics[width=.45\textwidth]{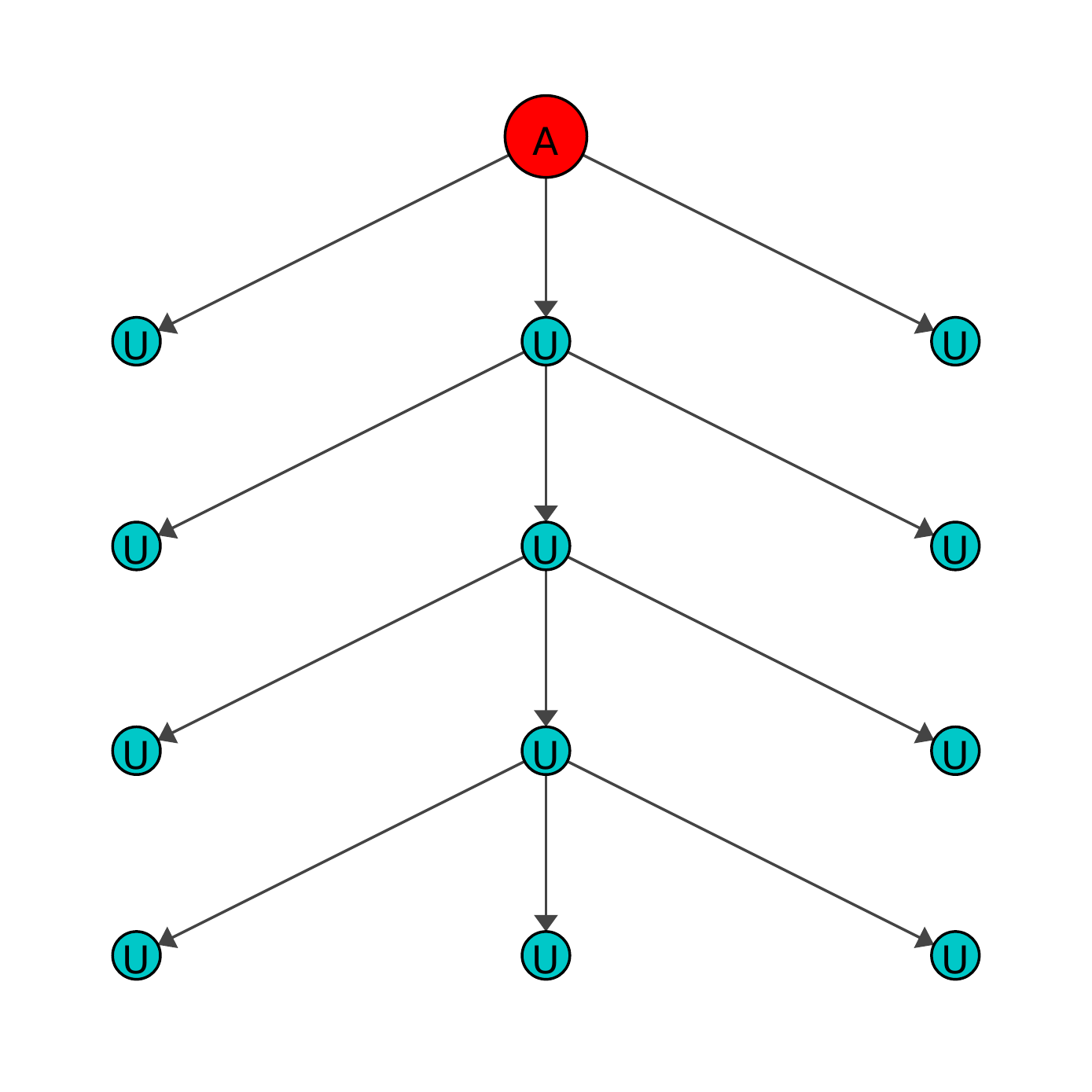}
\includegraphics[width=.45\textwidth]{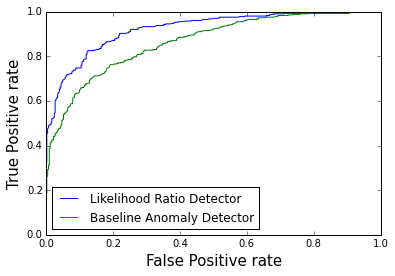}\\
\includegraphics[width=.45\textwidth]{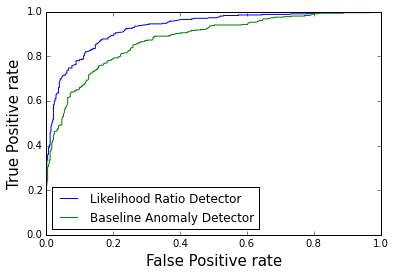}
\includegraphics[width=.45\textwidth]{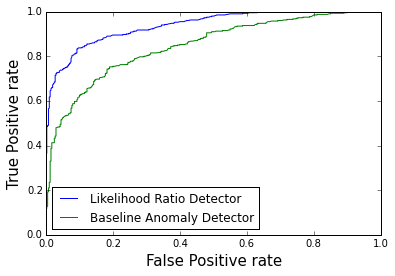}
\caption{Network topology and ROC curves for various levels of noise
  misspecification from upper right to lower right, the standard deviation of the
  noise is 10, 20 and 30 percent.}
\label{fig:meanzeronoise}
\end{figure}

The type of model misspecification analyzed here is one in which the specified
rate of attacker behavior is subject to non-systematic error.  Since
in most cases the attacker is a human with imperfect reasoning, it
would be unreasonable to expect to model the attacker's method
exactly.  Or on the other hand, it is not likely that an attacker
model will be able to perfectly specify what an attacker will do but
only be able to determine how attackers will behave \emph{on average}.
To analyze this type of model misspecification,  the attacker sends messages
along the center path and to the off center hosts in figure
\ref{fig:meanzeronoise}. (The rate along the 
center path is $.5$ and off center is $.25$.  The rate of background
messages is $1$ for all hosts).  However, the rate at which the
likelihood ratio models the attacker is subject to mean-zero, Gaussian
noise with standard deviations of $10$, $20$ and $30$ percent of the actual 
attacker rate. 

Figure \ref{fig:meanzeronoise} shows that even under non-systematic errors in
the attacker rates, the likelihood ratio detector still
performs better than the simple anomaly detector.  Interestingly,
increasing the standard deviation of the noise seems to have little to
no effect on the performance of the likelihood ratio detector.  The
slight difference in performance is likely a result of stochasticity
of the underlying data generating process.

\begin{figure}[ht]
\centering
\includegraphics[width =.45 \textwidth]{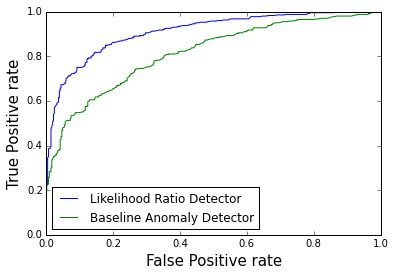}
\caption{ROC curve where the attacker is assumed to only take the
  center path.}
\label{centeronly}
\end{figure}

Another test of model misspecification in the caterpillar net
is the converse of the misspecification that generated figure
\ref{fig:ROCmis1}.  For that experiment, we modeled the attacker as
traversing any and all 
possible paths in the network where in reality he only traversed one
path.  However in this example, we calculate the likelihoods as if the
attacker only took the center path of the caterpillar net but in reality,
he was actually exploring the left and right hosts.  

This type of model misspecification has at least two real world
interpretations.  The first is that the model misspecifies the
attacker as being more intelligent than he actually is.  For example,
if the attacker's goal is to traverse the caterpillar to the last
center host, he would not gain anything from exploring the nodes off
of the center path.  However, if the attacker does not know the
topology of the network and the location of the valuable information,
he will explore the off-center nodes in the network.  Therefore, this
version of model misspecification models the attacker as more
intelligent and efficient than he actually behaves.  

Another interpretation is of practical concern.  The strength of a
likelihood ratio detector will --- to some degree --- depend on how accurate
the model of the attacker is.  However, the attacker is a goal driven
agent and therefore his behavior would be the result of an
optimization problem or even the solution to a fixed point problem.  
Such problems in complex 
environments such as a computer network  can become intractable with
only a small number of hosts (especially fixed point
solutions). As a result, it might only be feasible to consider attacker behavior
at hosts with the most sensitive information.  Therefore, it is
imperative for the likelihood ratio detector to be effective, even
when it only considers malicious activity along a subset of edges in
the network.  

Figure ~\ref{centeronly}  shows that the likelihood ratio detector
does outperform the simple anomaly detector even when the likelihood
ratio assumes the attacker only takes the center path.  It is
interesting to note that under this specification, the likelihood
ratio is being handicapped because the malicious activity off of the
center path enters multiplicatively and equally in the numerator and the
denominator of the likelihood ratio and therefore cancels.  However,
the malicious activity off of the center path does provide evidence of
an attack when using the simple anomaly detector.  In other words,
since the likelihood ratio assumes the rate of malicious message
transmission off of the center path is $0$, all messages off of the
center path are considered as normal background behavior, regardless
of how abnormal the messages seem.  On the other hand, the simple
anomaly detector considers traffic along all edges and therefore
abnormal activity off of the center path would push the simple
anomaly classification criteria toward the threshold.

\subsection{Real Network Experiments}
\label{sec:real}
In all of the previous experiments, the network topology and message
transmission rates were selected to be a stylized representation of
real world networks.  To ensure that the results hold for more
realistic networks, this section extracts network topologies
and estimates message rates from data collected from an active
computer network.  The data used are the Los
Alamos National Lab (LANL) ``User-Computer authentication associations in
time'' \citep{cyberdata}.  The data contain time-stamped user-computer
authentication logs.  For example, one data point is given by:
\begin{center}
\vspace{6pt}
\textit{U1, C5, 1}
\vspace{6pt}
\end{center}
\noindent which represents a user with an anonymous identification
number $U1$ logging into a host with identification number $C5$ at time $1$.
The data are taken over a period of nine months and the unit of time is
seconds.  The time stamps are given in whole numbers and therefore the
data are binned into seconds.  

Concurrent connection attempts by the same user constitute
communication between two hosts.%
\footnote{In fact, this is an assumption.  It might not always be true
  that concurrent connection attempts constitute communication between
  two hosts, consultation with the author of the dataset suggests that
  this is almost always the case.}%
Suppose, for example,  $U1$ logs
into $C1$ and then in 
the \emph{same second}  $U1$ logs into $C2$, then
a communication is inferred, which is then modeled as a directed  edge
from $C1$ and $C2$.  The network topology is 
constructed in this way because of the following logging artifact:
If  $U1$ is logged onto $C1$ and successfully logs onto $C2$,  the
authentication log will record $2$ entries.  The first will be an
authentication log at $C1$; user $1$ must be authenticated at $C1$ in
order to log onto other hosts.  The second will be a recording at $C2$
that notes a successful authentication into $C2$. In the language of
the preceding section, an authentication attempt is a
``message.'' 

Since the dataset covers nine months of authentication requests, which
includes $11,362$ users and $22,284$ computers, the computational approach
only allows for the analysis of  a subset of the network at a relatively
small slice in time.  To do this, only one hour intervals in the
first month of the dataset are considered.  However, due to computational limitations
only 
the elements of the network that see the most traffic are included.
This modeling choice is motivated by two generally observed patterns.  
First, the part of the network with the most traffic is
where it will be most difficult to detect an attacker.  In other
words, an attacker's rate is always relatively low.  Therefore, if the
background rate between hosts is relatively high, the attacker's signal to noise
ratio is low, thus rendering detection more difficult.
On the other hand, it
is easier for an attacker to go undetected when there are many users
logging into a specific machine.   The second motivation is that high
traffic areas are likely to be the location of valuable information
and hence be a likely location to deploy a novel attack detector.
The reason for this is that central servers typically contain the most
valuable information and therefore often receive the most traffic.

To extract a subgraph for analysis,  a random hour
in the first month of the data is sampled and then a full
graph is created based on the
method described above.  From the full graph,  any nodes (and
corresponding edges into and out of the node) that do not have more
than $15$ incoming messages and $15$ outgoing messages are removed.  This results
in several disconnected subgraphs.  The final graph  is
the largest connected component.  The estimate of the
per hour message transmission  rate is simply the one hour message
count.  That  rate is used to \emph{simulate} the background messages.  That is,
actual message times are not used as the observed data. Instead,  the
observed messages are used to estimate a Poisson rate to generate the times of
normal message transmission.  The distribution of per minute message
rates between nodes in the two extracted topologies is given in figure
\ref{realrate}.  The figure indicates that there is significant rate
heterogeneity with the highest rate being more than $10$ times that of
the lowest rate in each experiment.  
\begin{figure}
  \centering
  \includegraphics[width=.9\textwidth]{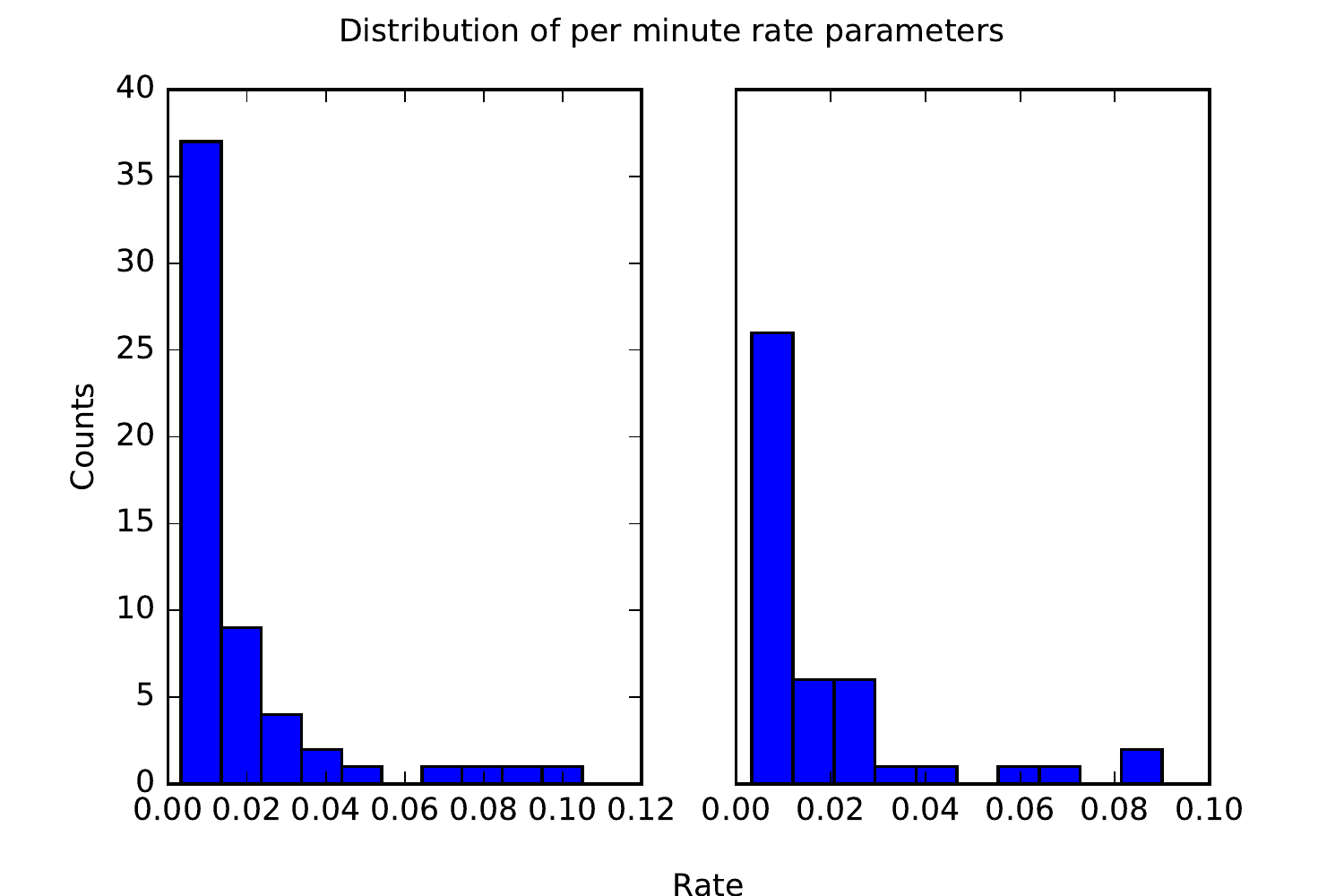}
  \caption{ The distribution of Poisson rates extracted from the LANL
    network.}
  \label{realrate}
  \end{figure}

Although using the actual message times
would be a more accurate representation of the real data,  time 
heterogeneity of the rate constants  would significantly increase both the model
sophistication as well as the computational requirement
to carry out the likelihood calculation in this initial framework.  In
other words,
although in a small time window, message transmission counts can be
described by a Poisson or negative binomial distribution, this rate
changes throughout the working day (and the weekend).  Therefore, the
likelihood ratio in section  \ref{model} would have to incorporate a
stochastic process that defines the behavior of the background
message transmission rates, thus adding another dimension to the
integral calculation in the denominator of the likelihood.  Such
modeling remains a topic of future work.  Furthermore, since the
ROC analysis of the real data requires $800$ simulations of the network,
there are not enough data points to carry out an accurate ROC
analysis. In other words, even if  it was known that on Tuesdays from 12PM to
1PM, the rate of message transmission was always the same, the data
only contains 36 such occasions which is not sufficient for ROC
analysis.  

Since the dataset does not contain attack data,  the rate of
malicious messages transmission is set to 10\% of the background rate.  Since
it is also not known which node might be initially compromised,
multiple experiments are performed in which the initially compromised
node is uniformly sampled.  

\paragraph{Real Data Results and Discussion}
The first subgraph analyzed is depicted in figure \ref{fig:real1}.
The numbered nodes  correspond to the attacker's
starting point for each of the experiments.  The ROC curves that
compare the likelihood ratio detector to the simple anomaly 
detector are given in figure \ref{fig:rocreal1}.  The second subgraph
analyzed is depicted in figure ~\ref{fig:real2}  and the associated
ROC curves are in figure \ref{fig:rocreal1}.  In all of the results,
the network is simulated for $T=60$ minutes.  

\begin{figure}[!htb]
\centering
\includegraphics[width =.8 \textwidth]{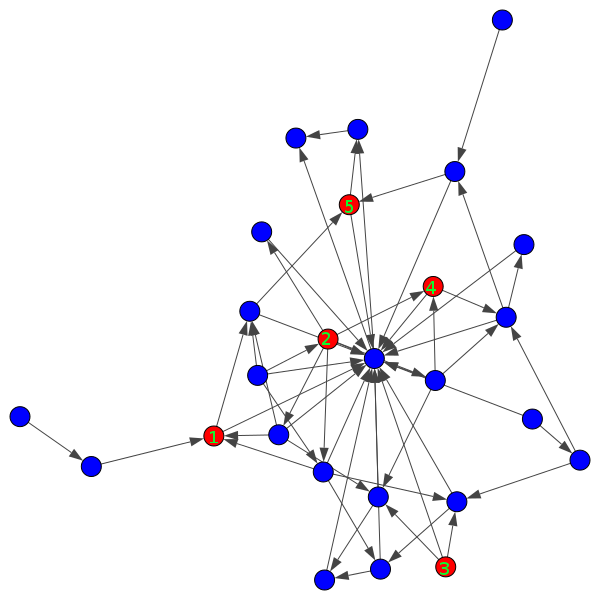}
\caption{First network topology extracted from LANL network.}
\label{fig:real1}
\end{figure}
\begin{figure}[!htb]
\centering
\includegraphics[width=.4\textwidth]{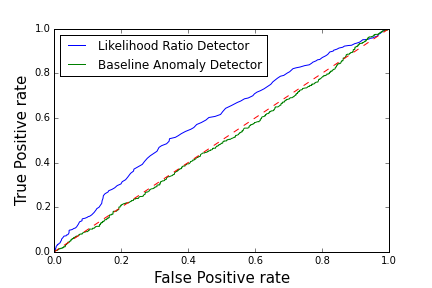}
\includegraphics[width=.4\textwidth]{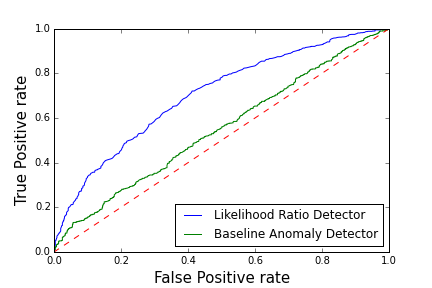}\\
\includegraphics[width=.4\textwidth]{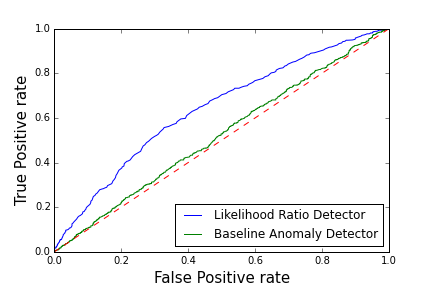}
\includegraphics[width=.4\textwidth]{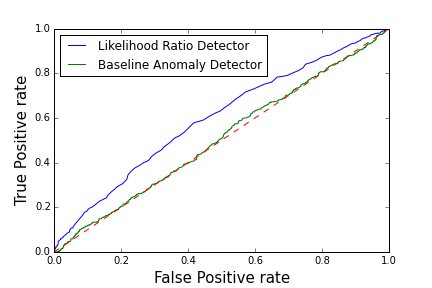}\\
\includegraphics[width=.4\textwidth]{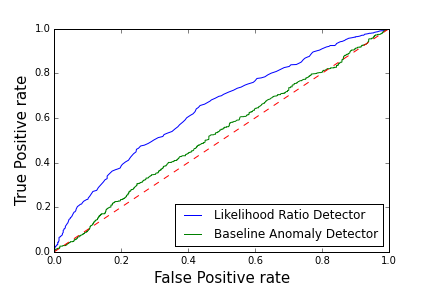} 
\caption{ROC results for real network topologies.  The plots from left
to right represent attacker starting points 1-5 as numbered in figure ~\ref{fig:real1}. }
\label{fig:rocreal1}
\end{figure}
\begin{figure}[!htb]
\centering
\includegraphics[width =.8 \textwidth]{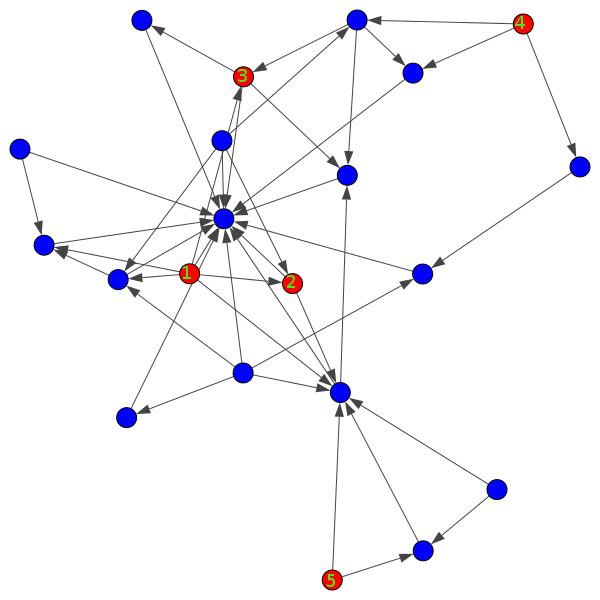}
\caption{Second network topology extracted from LANL network.}
\label{fig:real2}
\end{figure}
\begin{figure}[!htb]
\centering
\includegraphics[width=.4\textwidth]{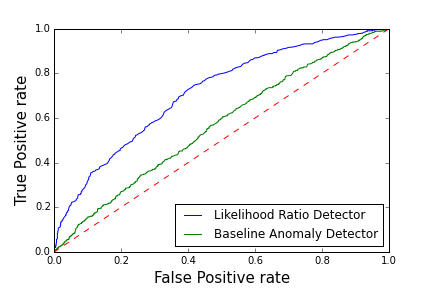}
\includegraphics[width=.4\textwidth]{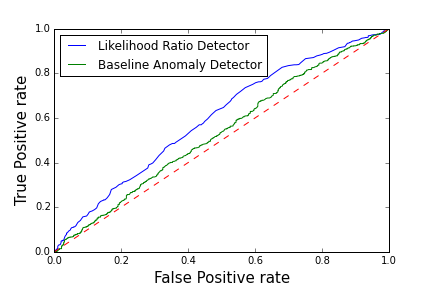} \\
\includegraphics[width=.4\textwidth]{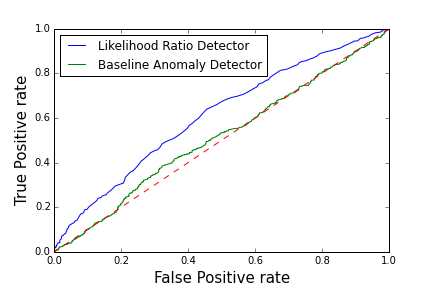}
\includegraphics[width=.4\textwidth]{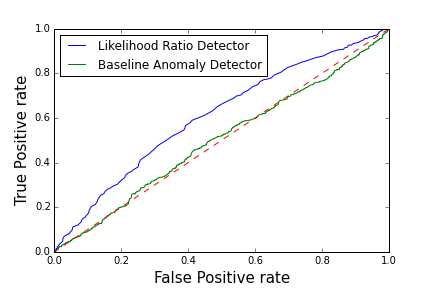}\\
\includegraphics[width=.4\textwidth]{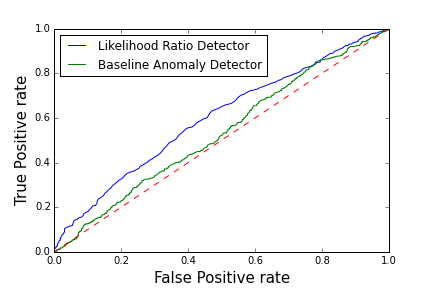} 
\caption{ROC results for real network topologies.  The plots from left
to right represent attacker starting points 1-5 as numbered in figure
~\ref{fig:real2}.} 
\label{fig:rocreal2}
\end{figure}

From the figures, it is clear that the likelihood ratio detector is
superior at all threshold levels to the simple anomaly detector.
Interestingly,  the overall performance of
both detectors seem to perform worse in the real network than in the
simulated network.  The reason for this is that in the real network,
only a subset of the nodes can become compromised.  For example, in
figure ~\ref{fig:real2} where the attacker initially has compromised node
numbered 2, only three of the 21 nodes can possibly become compromised.
Therefore, the signal of the attacker behavior is low compared to the
background noise generated by the entire network.  In the previous
experiments, it was possible for a greater proportion of nodes to
become compromised.  Nevertheless, the likelihood ratio detector still
outperforms the simple anomaly detector in all scenarios.  

It is also interesting to note that even when the simple anomaly
detector performs no better than random guessing (in the figures, this
is the case when the line representing the simple anomaly detector is
almost the same as the dashed 
line which represents the performance of uniform random guessing), the likelihood ratio
detector offers a slight advantage over random guessing.  The reason
for this is due to the definition of a likelihood.  More specifically,
the likelihood  can be relatively low due to the presence
of an attacker.  However, the likelihood can also be low due to an
atypical lack of activity.  In this case, the low signal to noise
ratio coupled with the simple anomaly detector sometimes classifies
low traffic as anomalous causes the detector to perform no better than
random.  Note that  anomalous low
traffic --- a phenomenon almost certainly not indicative of an attacker
--- does not affect the likelihood ratio's classification performance
as this type of behavior will appear in both the numerator and
denominator and cancel.  This can be solved in the anomaly detector by
restricting the parameter space to test for rate increases, as is done
in \cite{neil2013scan}.  However, the attacker model
proposed incorporates more information than just rate increase
testing, and therefore performs better.
\begin{table}[!htb]
\centering
\resizebox{\columnwidth}{!}{%
\begin{tabular}{|l|l|l|l|l|}
\hline
\multicolumn{5}{|c|}{Summary of Results}                                                                                                                                                                                                                                                                                                                                                                                                                                \\ \hline
 \begin{tabular}[c]{@{}l@{}}Graph\\Topology\end{tabular}                                                                      & ROC Curve                                                                                       & \begin{tabular}[c]{@{}l@{}}Misspecified \\ Rates?(Y/N)\end{tabular} & \begin{tabular}[c]{@{}l@{}}Clear Dominance \\ at  False  Positive\\ Rate $<.05$?\\ (Y/N)\end{tabular} & \begin{tabular}[c]{@{}l@{}}Likelihood Ratio\\ No worse  at all\\  False Positive\\ Rates?\end{tabular} \\ \hline
 \ref{fig:SmallStar}                                                            & \begin{tabular}[c]{@{}l@{}}Fig. \ref{fig:roc1},  \ref{fig:roc2} \ref{fig:roc3}\end{tabular} & N                                                                   & Y                                                                                                   & Y                                                                                                   \\ \hline
\begin{tabular}[c]{@{}l@{}}Fig. \ref{fig:SecondSet} \\ (Top Left)\end{tabular}      & \begin{tabular}[c]{@{}l@{}}Fig. \ref{fig:SecondSet} (Top Right, \\ Bottom Right)\end{tabular}   & N                                                                   & Y                                                                                                   & Y                                                                                                   \\ \hline
\begin{tabular}[c]{@{}l@{}}Fig. \ref{fig:SecondSet}\\  (Top Left)\end{tabular}      & \begin{tabular}[c]{@{}l@{}}Fig. \ref{fig:SecondSet} \\
(Bottom Left)\end{tabular}                                                          & N                                                                   & N                                                                                                   & Y                                                                                                   \\ \hline
Fig. \ref{fig:misspecif}                                                            & Fig \ref{attackprogress}                                                                        & N                                                                   & Y                                                                                                   & Y                                                                                                   \\ \hline
Fig. \ref{fig:misspecif}                                                            & Fig. \ref{fig:ROCmis1}                                                                              & Y                                                                   & N                                                                                                   & Y                                                                                                   \\ \hline
\begin{tabular}[c]{@{}l@{}}Fig. \ref{fig:meanzeronoise} \\ (Top Right)\end{tabular} & Fig. \ref{fig:meanzeronoise} (All)                                                              & Y                                                                   & Y                                                                                                   & Y                                                                                                   \\ \hline
\begin{tabular}[c]{@{}l@{}} Fig. \ref{fig:meanzeronoise}\\ (Top Right)\end{tabular}                                            & Fig. \ref{centeronly}                                                                           & Y                                                                   & Y                                                                                                   & Y                                                                                                   \\ \hline
Fig. \ref{fig:real1}                                                                & Fig. \ref{fig:rocreal1} (All)                                                                   & N                                                                   & Y                                                                                                   & Y                                                                                                   \\ \hline
Fig. \ref{fig:real2}                                                                & Fig. \ref{fig:rocreal2} (All)                                                                   & N                                                                   & Y                                                                                                   & Y                                                                                                   \\ \hline
\end{tabular}
}
\caption{Results of ROC experiments: The first column gives
  the figure representing the graph topology.  The second column gives
the ROC curve for the experiment.  The third column notes if the
likelihood ratio detector is clearly superior at a false positive rate
of .05 and less.  The final column notes whether the likelihood ratio detector
performs at least as good as the simple anomaly detector for all false
positive rates and better at one or more false positive rates.}
\label{ressum}
\end{table}
Table \ref{ressum} provides a brief summary of all experiments.
\section{Future work}
\label{conclusion}

There are many ways that our  model of the behavior of
a network can be extended. Most straightforwardly, by incorporating a
loss function for incorrect alerts and specifying a 
prior probability of there being an attacker, we should be
able to construct a Bayesian decision-theoretic extension of our anomaly detector.

Other future work involves applying our modeling approach for more realistic
networks.
This will likely require us to consider other approaches to evaluating our likelihood, e.g.,
importance sampling or MCMC, rather than simple sampling. Indeed, it may
even be possible to do closed form evaluation of our integrals, using the
Laplace convolution theorem~\citep{wolpert2013estimating}.

It should be possible to use our model to make predictions for \emph{any} 
scenario in which humans interact with technical systems
in continuous time, and our observational data is limited. In other
future work we will apply our models to make predictions in such
scenarios. (See also~\citet{wobo14b}.) 
This should allow us to address 
any statistical question concerning such scenarios, not just for anomaly detection.

Finally, we have started to extend our approach
to model not just a single human interacting
with a technology system, but a set of humans, interacting
with one another as well as that underlying technology system~\citep{event13laur}.
This extension can be viewed as an ``event-driven" non-cooperative
game theoretic approach, which is distinct from
both differential games (in which player moves are real-valued,
and chosen continually, at all times) and Markov games (which
lack hidden variables). Future work involves
investigating this event-driven game theory and its application to
likelihood ratio based attack detection.  

\section{Acknowledgments}
The authors would like to acknowledge the Army Research Office (ARO) for supporting this work under grant number W911NF15-1-0127.  

\bibliographystyle{apalike}
\bibliography{refs, cyber_refs}

\begin{thebibliography}{}

\bibitem[Borgwardt et~al., 2006]{borgwardt2006pattern}
Borgwardt, K.~M., Kriegel, H.-P., and Wackersreuther, P. (2006).
\newblock Pattern mining in frequent dynamic subgraphs.
\newblock In {\em Data Mining, 2006. ICDM'06. Sixth International Conference
  on}, pages 818--822. IEEE.

\bibitem[Chandola et~al., 2009]{chandola2009anomaly}
Chandola, V., Banerjee, A., and Kumar, V. (2009).
\newblock Anomaly detection: A survey.
\newblock {\em ACM Computing Surveys (CSUR)}, 41(3):15.

\bibitem[Djidjev et~al., 2011]{djidjev2011graph}
Djidjev, H., Sandine, G., Storlie, C., and Vander~Wiel, S. (2011).
\newblock Graph based statistical analysis of network traffic.
\newblock In {\em Proceedings of the Ninth Workshop on Mining and Learning with
  Graphs}.

\bibitem[Eberle et~al., 2010]{eberle2010insider}
Eberle, W., Graves, J., and Holder, L. (2010).
\newblock Insider threat detection using a graph-based approach.
\newblock {\em Journal of Applied Security Research}, 6(1):32--81.

\bibitem[Garcia-Teodoro et~al., 2009]{garcia2009anomaly}
Garcia-Teodoro, P., Diaz-Verdejo, J., Maci{\'a}-Fern{\'a}ndez, G., and
  V{\'a}zquez, E. (2009).
\newblock Anomaly-based network intrusion detection: Techniques, systems and
  challenges.
\newblock {\em computers \& security}, 28(1):18--28.

\bibitem[Gillespie, 1977]{gillespie1977exact}
Gillespie, D.~T. (1977).
\newblock Exact stochastic simulation of coupled chemical reactions.
\newblock {\em The journal of physical chemistry}, 81(25):2340--2361.

\bibitem[Hagberg et~al., 2014]{cyberdata}
Hagberg, A., Kent, A., Lemons, N., and Neil, J. (2014).
\newblock Credential hopping in authentication graphs.
\newblock In {\em 2014 International Conference on Signal-Image Technology
  Internet-Based Systems ({SITIS})}. IEEE Computer Society.

\bibitem[Jiang et~al., 2011]{timefreq}
Jiang, D., Xu, Z., Chen, Z., Han, Y., and Xu, H. (2011).
\newblock Joint time–frequency sparse estimation of large-scale network
  traffic.
\newblock {\em Computer Networks}, 55(15):3533 -- 3547.

\bibitem[Jiang et~al., 2015a]{netpred}
Jiang, D., Xu, Z., and Xu, H. (2015a).
\newblock A novel hybrid prediction algorithm to network traffic.
\newblock {\em annals of telecommunications - annales des
  télécommunications}, 70(9-10):427--439.

\bibitem[Jiang et~al., 2014a]{strans}
Jiang, D., Xu, Z., Zhang, P., and Zhu, T. (2014a).
\newblock A transform domain-based anomaly detection approach to network-wide
  traffic.
\newblock {\em Journal of Network and Computer Applications}, 40:292 -- 306.

\bibitem[Jiang et~al., 2015b]{wavelet}
Jiang, D., Yao, C., Xu, Z., and Qin, W. (2015b).
\newblock Multi-scale anomaly detection for high-speed network traffic.
\newblock {\em Transactions on Emerging Telecommunications Technologies},
  26(3):308--317.

\bibitem[Jiang et~al., 2014b]{endtoend}
Jiang, D., Zhao, Z., Xu, Z., Yao, C., and Xu, H. (2014b).
\newblock How to reconstruct end-to-end traffic based on time-frequency
  analysis and artificial neural network.
\newblock {\em \{AEU\} - International Journal of Electronics and
  Communications}, 68(10):915 -- 925.

\bibitem[Kang et~al., 2009]{cusum2}
Kang, J., Zhang, J.-Y., Li, Q., and Li, Z. (2009).
\newblock Detecting new p2p botnet with multi-chart cusum.
\newblock In {\em Networks Security, Wireless Communications and Trusted
  Computing, 2009. NSWCTC '09. International Conference on}, volume~1, pages
  688--691.

\bibitem[Kantas et~al., 2009]{Kantas2009MC}
Kantas, N., Doucet, A., Singh, S.~S., and Maciejowski, J.~M. (2009).
\newblock Overview of sequential monte carlo methods for parameter estimation
  on general state space models.
\newblock In {\em Proc. 15th IFAC Symposium on System Identification (SYSID)
  2009, Saint-Malo, France}.

\bibitem[Kent and Liebrock, 2013]{kent2013differentiating}
Kent, A.~D. and Liebrock, L.~M. (2013).
\newblock Differentiating user authentication graphs.
\newblock In {\em Security and Privacy Workshops (SPW), 2013 IEEE}, pages
  72--75. IEEE.

\bibitem[Krebs, 2014]{targetbreach}
Krebs, B. (2014).
\newblock The target breach, by the numbers.
\newblock
  \url{http://krebsonsecurity.com/2014/05/the-target-breach-by-the-numbers/}.

\bibitem[Krebs, 2015]{anthembreach}
Krebs, B. (2015).
\newblock Data breach at health insurer anthem could impact millions.
\newblock \url{http://krebsonsecurity.com/2015/02/data-breach-at-health-insurer
  -anthem-could-impact-millions/}.

\bibitem[Kumaraguru et~al., 2009]{kumaraguru2009school}
Kumaraguru, P., Cranshaw, J., Acquisti, A., Cranor, L., Hong, J., Blair, M.~A.,
  and Pham, T. (2009).
\newblock School of phish: a real-world evaluation of anti-phishing training.
\newblock In {\em Proceedings of the 5th Symposium on Usable Privacy and
  Security}, page~3. ACM.

\bibitem[Lee and Stolfo, 2000]{lee2000data}
Lee, W. and Stolfo, S.~J. (2000).
\newblock {\em Data mining approaches for intrusion detection}.
\newblock Defense Technical Information Center.

\bibitem[Lee and Xiang, 2001]{lee2001information}
Lee, W. and Xiang, D. (2001).
\newblock Information-theoretic measures for anomaly detection.
\newblock In {\em Security and Privacy, 2001. S\&P 2001. Proceedings. 2001 IEEE
  Symposium on}, pages 130--143. IEEE.

\bibitem[Neil et~al., 2013]{neil2013scan}
Neil, J., Hash, C., Brugh, A., Fisk, M., and Storlie, C.~B. (2013).
\newblock Scan statistics for the online detection of locally anomalous
  subgraphs.
\newblock {\em Technometrics}, 55(4):403--414.

\bibitem[Robert and Casella, 2004]{roca04}
Robert, C.~P. and Casella, G. (2004).
\newblock {\em Monte {C}arlo Statistical Methods}.
\newblock Springer-Verlag, New York.

\bibitem[Roesch et~al., 1999]{roesch1999snort}
Roesch, M. et~al. (1999).
\newblock Snort: Lightweight intrusion detection for networks.
\newblock In {\em LISA}, volume~99, pages 229--238.

\bibitem[Ryan et~al., 1998]{ryan1998intrusion}
Ryan, J., Lin, M.-J., and Miikkulainen, R. (1998).
\newblock Intrusion detection with neural networks.
\newblock In {\em Advances in neural information processing systems}, pages
  943--949. MORGAN KAUFMANN PUBLISHERS.

\bibitem[Sommer and Paxson, 2010]{sommer2010outside}
Sommer, R. and Paxson, V. (2010).
\newblock Outside the closed world: On using machine learning for network
  intrusion detection.
\newblock In {\em Security and Privacy (SP), 2010 IEEE Symposium on}, pages
  305--316. IEEE.

\bibitem[Staniford-Chen et~al., 1996]{staniford1996grids}
Staniford-Chen, S., Cheung, S., Crawford, R., Dilger, M., Frank, J., Hoagland,
  J., Levitt, K., Wee, C., Yip, R., and Zerkle, D. (1996).
\newblock Grids-a graph based intrusion detection system for large networks.
\newblock In {\em Proceedings of the 19th national information systems security
  conference}, volume~1, pages 361--370. Baltimore.

\bibitem[Thatte et~al., 2008]{agganom1}
Thatte, G., Mitra, U., and Heidemann, J. (2008).
\newblock Detection of low-rate attacks in computer networks.
\newblock In {\em INFOCOM Workshops 2008, IEEE}, pages 1--6. IEEE.

\bibitem[Wang et~al., 2004]{sequentialpost}
Wang, H., Zhang, D., and Shin, K. (2004).
\newblock Change-point monitoring for the detection of dos attacks.
\newblock {\em Dependable and Secure Computing, IEEE Transactions on},
  1(4):193--208.

\bibitem[Wolpert and Bono, 2015]{wobo14b}
Wolpert, D. and Bono, J. (2015).
\newblock Distribution-valued solution concepts.
\newblock {\em Review of Behavioral Economics}.
\newblock in press.

\bibitem[Wolpert et~al., 2013]{event13laur}
Wolpert, D.~H., Bhattacharya, T., Bent, R., Neil, J., Kent, A., and Giani, A.
  (2013).
\newblock Event-driven non-cooperative games, la-ur-13-22238.
\newblock Technical report, Los Alamos National Laboratory.

\bibitem[Wolpert and DeDeo, 2013]{wolpert2013estimating}
Wolpert, D.~H. and DeDeo, S. (2013).
\newblock Estimating functions of distributions defined over spaces of unknown
  size.
\newblock {\em Entropy}, 15(11):4668--4699.

\end{thebibliography}

\end{document}